\IfFileExists{ieeeconf.cls}{%
  \documentclass[letterpaper,10pt,conference]{ieeeconf}
}{%
  \documentclass[letterpaper,10pt,conference]{IEEEtran}
  \typeout{IEEEtran fallback: ieeeconf.cls is not included in the supplied project.}
}
\usepackage{amsmath,amsfonts,amssymb}
\usepackage{xcolor}
\usepackage{graphicx}
\usepackage{cite}
\usepackage{multirow}
\usepackage{booktabs}
\usepackage{tabularx}
\usepackage{float}
\usepackage{xurl}
\usepackage{longtable}
\usepackage{array}
\usepackage[hidelinks]{hyperref}
\newcommand{\CVaR}{\operatorname{CVaR}}

\newcommand{\numNarea}{3}
\newcommand{\numBusA}{118}
\newcommand{\numBusB}{218}
\newcommand{\numBusC}{318}
\newcommand{\numPdCand}{333}
\newcommand{\numDsys}{8550}
\newcommand{\numDreq}{1282.5}
\newcommand{\numdi}{427.5}
\newcommand{\numReqPct}{15}
\newcommand{\numNs}{2000}
\newcommand{\numBandPct}{5}
\newcommand{\numSigmaPct}{2}
\newcommand{\numRiskPct}{5}

\newcommand{\numCf}{772.1}
\newcommand{\numCr}{1129.7}
\newcommand{\numCc}{357.6}
\newcommand{\numGainPct}{46.3}

\newcommand{\numCrOne}{1114.3}

\newcommand{\numCrFive}{1129.7}

\newcommand{\numNprod}{6}
\newcommand{\numEps}{1}

\newcommand{\numKb}{2}

\allowdisplaybreaks

\newtheorem{proposition}{Proposition}
\newtheorem{definition}{Definition}
\newtheorem{lemma}{Lemma}
\newtheorem{theorem}{Theorem}

\providecommand{\IEEEoverridecommandlockouts}{}
\providecommand{\overrideIEEEmargins}{}
\IEEEoverridecommandlockouts
\title{\LARGE \bf
Risk-Aware Allocation of Transmission Capacity for Large Loads
}
\author{Shaoze Li, Bohang Fang and Cong Chen% 
\thanks{The authors are with the Thayer School of Engineering, Dartmouth College, Hanover, NH, USA.
{\texttt{\{Shaoze.Li.TH, Bohang.Fang, Cong.Chen\}@dartmouth.edu}}. \textit{Corresponding author:~Cong Chen.}}

\thanks{This research was supported by the Amazon Research Award (Spring 2025). Any opinions, findings, and conclusions or recommendations expressed in this material are those of the authors and do not necessarily reflect the views of the funding agencies.}}
\begin{document}
\maketitle
\thispagestyle{empty}
\pagestyle{empty}

\begin{abstract}

Rapid growth in data centers and other large loads is straining transmission grid interconnection processes. This paper develops a framework to quantify firm transmission grid capacity and additional risk-aware flexible capacity that can be unlocked when large loads accept a predefined level of interruption risk. We design a normalized unmet-request objective to serve large load requests. We prove that, in radial networks, every minimizer of this objective also maximizes aggregate interconnected capacity, and we show numerically that this property nearly holds on meshed networks. To allocate the firm and flexible transmission capacities among competing large loads, we use a simultaneous ascending auction (SAA) over products differentiated by {\em capacity, risk level, and bus location}. When large loads have additive, symmetric concave, unit-demand, and $K_b$-demand valuations of the firm and flexible capacity products, we show that the gross substitutes property holds, which supports SAA convergence to a competitive equilibrium. A numerical study on the IEEE 73-bus system shows that, at a 1\% risk setting, flexible capacity increases the total network capacity by 44.3\% relative to the firm capacity baseline, while the SAA reaches a competitive equilibrium.

\end{abstract}

\section{Introduction}
The rapid growth of large electricity loads is creating increasing pressure on transmission grid interconnection \cite{wecc_large_load_2025,ferc2025colocation}. AI-driven data centers are an important and rapidly growing example of such loads. In the United States, data centers accounted for approximately 4.4\% of total electricity consumption in 2023, with this share expected to reach 6.7\%--12\% by 2028 \cite{shehabi2024datacenter}. Because grid expansion is slow and capital-intensive, a structural mismatch is emerging between rapidly growing large-load interconnection requests and limited transmission capacity, leading to prolonged interconnection queue times for large-load projects \cite{ferc2025colocation}.

One way to accelerate large-load interconnection is to distinguish between traditional {\em firm capacity} and {\em flexible capacity}. Firm capacity refers to withdrawal capacity that can be supported without curtailment under the prescribed operating conditions, whereas flexible capacity is additional withdrawal capacity that can be curtailed under stressed conditions subject to predefined reliability or risk limits. Historically, grid operators have allocated firm capacity by reserving transmission network capacities for worst-case peak demand, resulting in substantial under-utilization of the grid \cite{gu2026role}. Some large loads can provide the operational flexibility to unlock latent grid capacity. Data centers are an important example: Google has committed to providing 1~GW of demand response capacity \cite{google2026demandresponse}, enabled by workloads such as batch processing and AI model training that can tolerate low-probability interruptions without compromising overall service quality \cite{chen2020internet}. By leveraging low-probability load interruptions, grid operators can allocate substantial flexible transmission capacity \cite{norris2025}, thereby accelerating large-load interconnection and improving grid utilization.

Extensive literature has explored how the existing grid can accommodate growing large-load demand. Recent studies have shown that flexible connections can substantially increase load hosting capacity and accelerate interconnection in distribution networks \cite{gu2026role,gu2026flexible}, while flexibility-aware siting can expand feasible data-center interconnection opportunities in transmission systems \cite{kim2026flexibility}. Collectively, these studies highlight the potential of flexible and risk-aware operation to unlock untapped grid capacity. However, limited attention has been given to the joint challenge of quantifying and allocating firm and flexible capacity in transmission networks under uncertainty.

Allocating scarce network capacity among competing large-load developers poses a further challenge. Existing queue-based regimes process interconnection requests chronologically but can lead to speculative requests, withdrawals, and repeated restudies \cite{wecc_large_load_2025}. The consequences are visible in ERCOT, which tracked over 474 GW of
large-load requests as of June 2026---roughly five times its record peak
demand, with about 90\% associated with data centers
\cite{abbott2026audit}---while only about 4 GW was observed consuming
power \cite{ercot2026overview}. To unlock transmission network capacity, mitigate interconnection queue delays, and address the competition among large-load developers, we therefore introduce a market-based mechanism that allocates firm and flexible transmission capacities among competing large-load developers with heterogeneous capacity valuations. Specifically, we adopt the Simultaneous Ascending Auction (SAA), a well-established multi-item auction in public-sector resource allocation, most notably in spectrum auctions \cite{milgrom2000saa,cramton1997fcc,cramton2002spectrum}. SAA can discover prices and allocate scarce withdrawal capacity according to bidders' heterogeneous valuations, thereby improving resource allocation efficiency \cite{milgrom2000saa}.

Our primary contributions are two-fold.

We propose a normalized unmet-request objective to quantify firm and flexible
transmission capacity for large-load interconnection. For radial networks, we prove that every minimizer of this objective also maximizes aggregate interconnected capacity; on meshed networks, the computed flexible allocations attain 97.2\% of their respective maximum aggregate capacity.

Second, we design an efficient auction-based market mechanism to allocate capacity products. Each bidding product is characterized by three attributes: {\em (capacity, risk level, location)} for transmission network access. We apply the SAA to allocate these firm and flexible capacity products to competing large-load developers. Furthermore, we identify four practically interpretable valuation structures for these capacity products---additive, symmetric concave, unit-demand, and $K_b$-demand valuations---and prove that each satisfies the gross substitutes condition. Thus, SAA with straightforward bidding converges to a competitive equilibrium of the capacity allocation.

The paper is organized as follows. We begin by detailing the firm- and flexible-capacity quantification for transmission networks in Section \ref{section2}. Next, we introduce the SAA and its application to our problem in Section \ref{section3}. Finally, we provide a numerical case study in Section \ref{section4}. A summary of the key symbols used in this paper is provided in Table \ref{notations}. Boldface lowercase letters denote column vectors, e.g., $\boldsymbol{x} = (x_1, \dots, x_n)^\top$. For any scalar $z$, we define $z^+ := \max\{z, 0\}$ and $z^- := z^+ - z$. These definitions extend to vectors and matrices, where the operations are applied element-wise. For any set $N$, let $|N|$ denote its cardinality.
\section{Transmission Capacity Quantification}\label{section2}
In this section, we first determine the firm capacity $c^f_i$ for large-load interconnection requests at bus $i$, and then unlock the additional flexible capacity $c^c_i$ using the risk-aware model. We focus on how much large-load capacity can be accommodated by the existing transmission capacity. Generation cost, generation capacity limits, and grid-expansion costs are not considered in the present framework; the associated economic and planning analysis is left for future research.
This simplification allows us to focus on transmission limitations and unlock latent grid capacity, rather than mixing generation capacity limits and economic models together.

\begin{table}[htbp]
    \centering
    \small
    \vspace{-1.5em}
    \caption{Summary of Notations}
    \label{notations}
    
    \begin{tabularx}{\columnwidth}{c X}
        \hline
        \textbf{Symbol} & \textbf{Description} \\
        \hline
        $N$ & Transmission network bus index set. \\
        $J$ & Transmission network line index set. \\
        $c^f_i\in \mathbb{R}_+$
        & The {\em firm capacity} for large loads at bus $i$. \\
        $c^r_i\in \mathbb{R}_+$
        & The {\em flexible capacity} for large loads with an outage risk level $r$ at bus~$i$. \\
        $c^c_i\in \mathbb{R}_+$
        & The incremental flexible capacity by accepting the outage risk $r$
        (defined as $c^r_i-c^f_i$). \\
        ${l}_i\in \mathbb{R}$
        & The existing random electricity net demand at bus $i$, bounded by
        $[\underline{l}_i,\overline{l}_i]$. \\
        $\overline{q}_i\in \mathbb{R}_+$
        & Power withdrawal limitation at bus $i$. \\
        $d_i\in \mathbb{R}_+$
        & Total new load interconnection request quantity at bus $i$. \\
        $\boldsymbol{S}\in \mathbb{R}^{|J|\times|N|}$
        & Shift-factor matrix of the DC power flow model. \\
        $\boldsymbol{\overline{b}},\boldsymbol{\underline{b}}\in \mathbb{R}^{|J|}$
        & Transmission network line thermal limits. \\
        \hline
    \end{tabularx}
    \vspace{-1.5em}
\end{table}

\subsection{{Robust Optimization for Firm Capacity}}\label{section_ro}

In this subsection, we determine the firm withdrawal capacity available for large-load interconnection while ensuring system security under physical network constraints.
\begin{subequations}\label{ro}
\begin{align}
\min_{\boldsymbol{c^f}\ge \boldsymbol{0}} \quad
& \sum_{i \in N\setminus \Omega}
\left(
\frac{d_i-c^f_i}{d_i}
\right)^2
\label{ro1a}\\[4pt]
\text{s.t.}\quad
& \boldsymbol{c^f}+\boldsymbol{\overline{l}}
\le \boldsymbol{\overline{q}},
\label{ro1b}\\
& \boldsymbol{\underline{b}}
\le
\boldsymbol{S(c^f+l)}
\le
\boldsymbol{\overline{b}},
\quad
\forall~
\boldsymbol{l}\in
[\boldsymbol{\underline{l}},\boldsymbol{\overline{l}}].
\label{ro1c}
\end{align}
\end{subequations}

Here, the objective is to minimize the normalized unserved demand ratio at buses with load interconnection requests to maximize overall large-load connectivity. This objective accounts for heterogeneous demand
levels, preventing locations with smaller demands from being
overserved.  Let $\boldsymbol{d}$ denote the aggregated withdrawal capacity demand vector for all large loads seeking connectivity, where $d_i-c_i^f$ represents the unserved demand at bus $i$. We define $\Omega$ as the set of buses without any interconnection requests, i.e., $d_i=c_i^f=0$ for all $i\in\Omega$. We assume that $d_i$ is finite and $d_i>c_i^f$ for the scarcity of network firm capacity, based on the fact that there is a long queue of large-load projects waiting for network interconnection. Define $\boldsymbol{l}$ as a random vector representing the existing net electricity demand, accounting for uncertainties in both renewable generation and base load. Constraint~\eqref{ro1b} enforces nodal withdrawal limits for the worst case, e.g., the maximum capacity of a local transformer. For buses without specific restrictions, the withdrawal limit $\overline{q}_i$ is set to infinity. Note that we omit power injection limits from this formulation; if the maximum existing injection already satisfies grid limits, such constraints remain satisfied when additional withdrawal capacity is introduced.

In this work, we employ a DC power flow model on the transmission networks adapted from \cite{low2022power}. Constraint~\eqref{ro1c} enforces physical network upper and lower limits for all possible total withdrawal capacities. Using the box structure of the uncertainty set and the reformulation in \cite{chen2024wholesale}, constraint~\eqref{ro1c} {is equivalent to the following linear constraints:}
\begin{subequations}\label{ro_reform}
\begin{align}
\boldsymbol{S}\boldsymbol{c^f}
&\le
\boldsymbol{\overline{b}}
-
\left(
\boldsymbol{S}^+\boldsymbol{\overline{l}}
-
\boldsymbol{S}^-\boldsymbol{\underline{l}}
\right),
\label{eq:robust_reform_ub}\\
\boldsymbol{S}\boldsymbol{c^f}
&\ge
\boldsymbol{\underline{b}}
-
\left(
\boldsymbol{S}^+\boldsymbol{\underline{l}}
-
\boldsymbol{S}^-\boldsymbol{\overline{l}}
\right).
\label{eq:robust_reform_lb}
\end{align}
\end{subequations}
Such a representation transforms \eqref{ro} into a convex quadratic program with linear constraints, enabling the computation of firm capacity $\boldsymbol{c^f}$ while ensuring network security.

The demand-normalized objective in \eqref{ro} is designed to account for heterogeneous large-load interconnection requests, rather than allocating capacity according to the total number of MW that can be accommodated. This naturally raises an important question: does balancing the fulfillment of heterogeneous requests reduce the aggregate hosting capacity of the network? The following structural result shows that, for radial networks, this tradeoff does not arise. The solution of \eqref{ro} simultaneously minimizes the normalized unmet requests and attains the maximum aggregate firm capacity permitted by the same network constraints.  For meshed networks, our
simulation results show that the difference is small.

\begin{proposition}\label{merge}
If the network is radial, the optimal solution to problem~\eqref{ro} is also optimal for the following total capacity maximization problem:
\begin{equation}\label{eq:one}
\max_{\boldsymbol{c^f}\ge\boldsymbol{0}}
\quad
\sum_{i\in N\setminus\Omega}c_i^f
\quad
\text{s.t.}
\quad
\eqref{ro1b}~\&~\eqref{ro1c}.
\end{equation}
\end{proposition}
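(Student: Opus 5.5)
The plan is to use the special structure of the shift-factor matrix on a tree. For a radial network, take the slack bus as root and orient every line $j$ away from it. A withdrawal at bus $i$ is then carried exactly along the unique root-to-$i$ path. Hence $S_{ji}\in\{0,1\}$ (up to the sign fixed by the line orientation), with $S_{ji}=1$ exactly when $i$ lies in the subtree $T_j$ hanging below line $j$.

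Under this structure, \eqref{eq:robust_reform_ub}--\eqref{eq:robust_reform_lb} become two-sided bounds on subtree sums:
\begin{equation*}
L_j \le \sum_{i\in T_j} c^f_i \le U_j, \qquad \forall j\in J,
\end{equation*}
where $U_j$ and $L_j$ are constants built from $\overline{b},\underline{b},\overline{l},\underline{l}$. Constraint \eqref{ro1b} becomes $c^f_i\le \overline{q}_i-\overline{l}_i=:u_i$. I also fix $c^f_i=0$ for $i\in\Omega$. The family $\{T_j\}\cup\{\{i\}\}$ is laminar, so the upper-bound part of the feasible set is a laminar-capacity (polymatroid-type) region. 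If a line has the opposite orientation, the roles of $U_j$ and $L_j$ swap with a sign change, and the argument below is unchanged.

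The first key step is to show that any minimizer $c^\star$ of \eqref{ro} is \emph{maximal}: no coordinate $i\in N\setminus\Omega$ can be increased while staying feasible. The objective is separable, and each term $((d_i-c_i)/d_i)^2$ is strictly decreasing on $c_i<d_i$. The standing assumption $d_i>c^f_i$ puts us in that regime.

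Suppose $c^\star_i$ could be raised by some $\epsilon>0$. Raising one coordinate only increases subtree sums, so every lower bound $L_j$ remains satisfied. Feasibility therefore reduces to the upper bounds, which are assumed to allow the increase. The objective would then strictly decrease, contradicting optimality. So $c^\star$ is maximal, and for every $i\in N\setminus\Omega$ either $c^\star_i=u_i$ or some subtree constraint with $i\in T_j$ is tight.

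The second step, which I expect to be the main obstacle, is to show that such a maximal point attains the optimum of \eqref{eq:one}. I would give a direct covering (LP-duality-style) argument rather than invoke polymatroid theory.

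For each $i\in N\setminus\Omega$, choose a tight set containing it: either the singleton $\{i\}$ if $c^\star_i=u_i$, or a tight subtree $T_j\ni i$. Among the chosen sets, keep only the inclusion-maximal ones, say $A_1,\dots,A_k$. By laminarity these are pairwise disjoint, and they cover $N\setminus\Omega$. Buses of $\Omega$ inside a subtree contribute zero, so they cause no problem.

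Then $\sum_{i\in N\setminus\Omega}c^\star_i=\sum_{m}\mathrm{cap}(A_m)$, where $\mathrm{cap}(A_m)$ is $u_i$ or $U_j$ as appropriate. Any feasible $c'$ of \eqref{eq:one} satisfies $\sum_{i\in A_m}c'_i\le \mathrm{cap}(A_m)$ for each $m$, so summing gives $\sum c'_i\le\sum c^\star_i$. Since $c^\star$ is feasible for \eqref{eq:one}, which has the same constraint set, it is optimal for it.

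The delicate points to check carefully are these. First, the tree reduction really yields $0/1$ columns of $S$, including the sign handling for lines oriented toward the root. Second, the lower-bound constraints are indeed monotone-preserved under increasing withdrawals. Third, unbounded $\overline{q}_i$ (infinite $u_i$) never serves as a tight cover set; this holds because such a coordinate can be tight only through some finite $U_j$.
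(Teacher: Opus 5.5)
Your proof is correct and is essentially the paper's argument: optimality of the normalized objective together with $c_i^\star<d_i$ rules out any feasible single-coordinate increase, and laminarity (the rootmost, i.e.\ inclusion-maximal, tight downstream sets are pairwise disjoint) lets you sum the tight constraints to bound the aggregate capacity. The difference is only in packaging. The paper proves an augmentation lemma by contradiction against a better point $\widehat{\boldsymbol c}$ and covers only the coordinates with $\widehat c_i>c_i$, so nodal limits never appear as cover sets; you instead cover all of $N\setminus\Omega$, including nodal-limit singletons, which gives a direct dual-style certificate.
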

The proof is provided in Appendix~\ref{proofth1}. 

\subsection{Risk-aware Model for Flexible Capacity}

Problem~\eqref{ro} derives a withdrawal capacity that guarantees network security across all possible scenarios. However, by strictly accounting for low-probability extreme events, the robust approach inevitably yields overly conservative capacity estimates. Consequently, this can restrict the grid's available hosting capacity for rapidly growing large-load interconnection requests. From the perspective of large-load operators, some projects may be willing to tolerate a marginal risk of power interruption in exchange for expedited grid interconnection, provided that their operations can accommodate such curtailment. Data centers are one important example: non-real-time or flexible computing workloads can be deferred during periods of grid stress and  low-probability power interruptions
can be mitigated by co-located or rented backup generators.

Consequently, we introduce the following risk-aware model to extract extra flexible capacity. The conditional value at risk (CVaR) is adopted as our risk measure because it not only controls the probability and the magnitude of constraint violations, but also requires no restrictive assumptions on the underlying uncertainty distribution.

The firm capacity $\boldsymbol{c^f}$ determined above is treated as a fixed baseline, and the flexible-capacity calculation is required to retain at least this amount of capacity.
\begin{subequations}\label{cvar}
\begin{align}
\min_{\boldsymbol{c^r}\ge\boldsymbol{c^f}}
\quad
&\sum_{i\in N\setminus\Omega}
\left(
\frac{d_i-c_i^r}{d_i}
\right)^2
\\[4pt]
\text{s.t.}\quad
&
\textbf{CVaR}_{\boldsymbol{\alpha}}
[\boldsymbol{c^r}+\boldsymbol{l}]
\le
\boldsymbol{\overline{q}},
\label{cvar_1}\\
&
\textbf{CVaR}_{\boldsymbol{\alpha}}
[\boldsymbol{S(c^r+l)}]
\le
\boldsymbol{\overline{b}},
\label{cvar_2}\\
&
\textbf{CVaR}_{\boldsymbol{\alpha}}
[-\boldsymbol{S(c^r+l)}]
\le
-\boldsymbol{\underline{b}},
\label{cvar_3}
\end{align}
\end{subequations}
where $\boldsymbol{r}=\boldsymbol{1-\alpha}$ represents the outage risk level for the flexible capacity. Specifically, the firm capacity \(c_i^f\) acts as a strictly must-serve load,
whereas the incremental flexible capacity \(c_i^r-c_i^f\) acts as an
interruptible load. This risk level indicates the probability of network constraint violations in \eqref{cvar_1}--\eqref{cvar_3}, during which the flexible capacity may face forced power interruption under extreme realizations of the random net demand $\boldsymbol{l}$. Compared to the robust withdrawal model in \eqref{ro}, model (4) imposes CVaR constraints on withdrawal limitation and network limit violations in \eqref{cvar_1}--\eqref{cvar_3}. Rather than demanding zero violations under all cases, CVaR constraints ensure that the probability of forced power interruptions for the flexible capacity adheres to the predefined interruption risk tolerance $\boldsymbol{r}$. We solve~\eqref{cvar} with a tractable scenario-based reformulation following \cite{chen2024wholesale}.

Similar to the firm-capacity case, we prove that using the demand-normalized objective does not reduce the maximum aggregate capacity under the risk-aware formulation.

\begin{proposition}\label{pro2} For the radial network, the optimal solution for problem~\eqref{cvar} is also optimal for the following total risk-aware capacity maximization problem:
\vspace{-0.5em}
\begin{equation}\label{eq:one_cvar}
\max_{\boldsymbol{c^r}\ge\boldsymbol{c^f}}
\quad
\sum_{i\in N\setminus\Omega}c_i^r
\quad
\text{s.t.}
\quad
\eqref{cvar_1}~-~\eqref{cvar_3}.
\end{equation}
\end{proposition}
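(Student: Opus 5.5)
The plan is to mirror the argument behind Proposition~\ref{merge}. First I will reduce the CVaR constraints \eqref{cvar_1}--\eqref{cvar_3} to deterministic linear constraints in $\boldsymbol{c^r}$. Then I will use the tree structure to show they form a \emph{laminar} packing system. Finally I will show that the minimizer of the normalized objective is a maximal point of that system, and that every maximal point of a laminar system has maximal total.

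\textbf{Step 1 (deterministic reformulation).} CVaR is translation invariant, $\CVaR_\alpha[a+X]=a+\CVaR_\alpha[X]$ for deterministic $a$. Applied componentwise:
\begin{itemize}
\item \eqref{cvar_1} becomes $c_i^r\le \overline{q}_i-\CVaR_{\alpha_i}[l_i]=:\hat q_i$;
\item \eqref{cvar_2} becomes $\boldsymbol{s}_j^\top\boldsymbol{c^r}\le \overline{b}_j-\CVaR_{\alpha_j}[\boldsymbol{s}_j^\top\boldsymbol{l}]=:\hat b_j$, where $\boldsymbol{s}_j^\top$ is the $j$th row of $\boldsymbol{S}$;
\item \eqref{cvar_3} becomes $-\boldsymbol{s}_j^\top\boldsymbol{c^r}\le \hat b'_j$, defined analogously.
\end{itemize}
So both \eqref{cvar} and \eqref{eq:one_cvar} optimize over the same polyhedron $P=\{\boldsymbol{c^r}\ge\boldsymbol{c^f}\}\cap\{\text{these linear constraints}\}$. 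It is nonempty, since the robust $\boldsymbol{c^f}$ satisfies the CVaR constraints.

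\textbf{Step 2 (laminar structure on a radial network).} Take the slack bus as root. In a radial DC network, a withdrawal at bus $i$ flows only along the unique root-to-$i$ path. Hence row $j$ of $\boldsymbol{S}$ equals $\sigma_j\mathbf 1_{D(j)}^\top$, where $D(j)$ is the set of buses downstream of line $j$ and $\sigma_j\in\{\pm1\}$ is fixed by the line orientation.

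Because $\boldsymbol{c^r}\ge\boldsymbol{c^f}\ge\boldsymbol{0}$, the sum $\sum_{i\in D(j)}c_i^r$ can only grow as $\boldsymbol{c^r}$ grows. For each line, the constraint of the pair that bounds this sum from below is satisfied at $\boldsymbol{c^f}$, so it remains satisfied on all of $\{\boldsymbol{c^r}\ge\boldsymbol{c^f}\}$ and is redundant. The feasible set therefore reduces to
\[
\boldsymbol{c^f}\le\boldsymbol{c^r},\qquad c^r(T)\le u_T \ \ \forall T\in\mathcal L,
\]
where $\mathcal L$ consists of the subtrees $D(j)$ and the singletons $\{i\}$. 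Any two subtrees of a rooted tree are either nested or disjoint, so $\mathcal L$ is laminar. Buses in $\Omega$ carry no requests and are held at $c_i^r=0$.

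\textbf{Step 3 (minimizer is maximal).} Let $\boldsymbol{c}^*$ solve \eqref{cvar}, assuming as in Section~\ref{section_ro} that $c_i^*<d_i$. Suppose some coordinate $i\notin\Omega$ can be increased by $\epsilon>0$ while staying in $P$. The derivative of the objective in $c_i$ is $-2(d_i-c_i^*)/d_i^2<0$, so this increase strictly decreases the objective, contradicting optimality. Hence every $i\notin\Omega$ lies in some tight set $T\in\mathcal L$ with $c^*(T)=u_T$.

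\textbf{Step 4 (maximal implies maximum total).} This is the main step. Tight sets in a laminar family are nested or disjoint, so the inclusion-maximal tight sets $T_1,\dots,T_m$ are pairwise disjoint. By Step 3 they cover $N\setminus\Omega$. Then for any feasible $\boldsymbol{c}'$,
\[
\sum_{i\notin\Omega}c_i'=\sum_k c'(T_k)\le\sum_k u_{T_k}=\sum_k c^*(T_k)=\sum_{i\notin\Omega}c_i^*,
\]
so $\boldsymbol{c}^*$ solves \eqref{eq:one_cvar}. This is the usual fact that all bases of a polymatroid have equal rank, specialized to laminar constraints.

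The delicate points are:
\begin{itemize}
\item confirming the sign pattern of $\boldsymbol{S}$ in Step 2, so that the reverse-direction CVaR constraint is indeed redundant once $\boldsymbol{c^r}\ge\boldsymbol{c^f}$;
\item ensuring each $T_k$ contains no $\Omega$ buses with nonzero allocation, which holds because those buses are fixed at $0$.
\end{itemize}
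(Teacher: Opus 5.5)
Your proposal is correct and follows essentially the same route as the paper: translation equivariance of CVaR reduces \eqref{cvar_1}--\eqref{cvar_3} to a laminar polyhedron with floor $\boldsymbol{c^f}$. The scarcity assumption $c_i^\star<d_i$ then turns any feasible single-coordinate increase into a strict decrease of the normalized objective. The only difference is bookkeeping: your covering of $N\setminus\Omega$ by pairwise disjoint inclusion-maximal tight sets is the contrapositive of the paper's augmentation lemma, which instead picks the rootmost tight branch on each path of the coordinates where a better feasible point exceeds the optimizer and sums those disjoint constraints to reach a contradiction.
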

The proof is provided in Appendix~\ref{app:flexproof}.

\section{Value-Based Allocation of Capacity Products}\label{section3}
Section~\ref{section2} determines the firm and flexible capacity products that can be supported by the existing transmission network. Once these capacity products are determined, however, a separate challenge remains: how should scarce network-access capacity be allocated among competing large-load projects? Existing queue-based procedures primarily prioritize projects according to application order and therefore do not directly account for the heterogeneous values that different large-load developers place on capacity. This motivates a value-based allocation mechanism. We adopt the SAA because the capacity-allocation problem naturally involves multiple products differentiated by capacity, reliability, and location. SAA allows these products to be priced simultaneously, enables bidders to switch among alternative products as relative prices rise, and provides an iterative price-discovery process without bidders' private valuations in advance.

Our contribution is not the SAA mechanism itself, but the theoretical analysis of when it provides a well-behaved allocation for this transmission capacity. We identify four practically interpretable valuation structures for large-load developers---including additive, symmetric concave, unit-demand, and $K_b$-demand valuations---and prove that they satisfy the gross substitutes property.  Building on this result, we establish that a competitive equilibrium exists for these valuation classes. Thus, the proposed allocation framework provides not only price discovery, but also equilibrium  guarantees.

\subsection{Simultaneous Ascending Auction}\label{SecSAA}

The SAA is a multi-round auction for multiple items. In each round, large-load developers submit bids simultaneously for any items in which they are interested. The capacity bidding item is defined as a tuple characterized by three attributes: capacity, risk level, and location. Building on models~\eqref{ro} and~\eqref{cvar} in Section~\ref{section2}, we determine both the firm capacity and an additional flexible capacity with a risk level for each location. Consequently, for each candidate location, we construct two distinct indivisible bidding products:

\begin{enumerate}
    \item \emph{Firm capacity $c_i^f$} represents capacity with zero risk. It is formulated as the tuple $(c_i^f,0,i)$, where $c_i^f$ denotes the firm capacity, $0$ indicates the risk level, and $i$ represents the bus location index.

    \item \emph{Incremental flexible capacity $c_i^c$} represents additional flexible capacity subject to the predefined outage risk $r$. It is formulated as the tuple $(c_i^c,r,i)$, where $c_i^c:=c_i^r-c_i^f$.
\end{enumerate}

It is worth noting that the firm capacity $c_i^f$ and the incremental flexible capacity $c_i^c$ could be partitioned into smaller indivisible units, similar to spectrum auctions \cite{fcc2021auction110}. However, to ensure a clear exposition of the bidding process, we focus on the formulation in which the capacities at each location are not further subdivided in the remainder.

Once these products are constructed, their physical attributes remain fixed during the auction. The SAA determines which bidder receives each capacity product. During any given round of SAA, large-load developers submit their bids based on their private values. Importantly, any new bid must meet a minimum bid increment, which is computed by adding a predetermined bid increment $\epsilon$ to the item's current standing high bid. Once the round closes, the auctioneer posts the bidding results. For each capacity item, the new standing high bid becomes the maximum of the previous standing high bid and the highest new bid in the current round. The bidder who placed the highest bid is recorded as the new standing high bidder.

\subsection{Capacity Valuations and Theoretical Guarantees}\label{sec:value}

Because large-load developers differ in their preferred locations and tolerance for power interruption, we represent their heterogeneous private values through valuations over the capacity, risk, and location attributes of each capacity product.

Consider a market comprising a set $K$ of indivisible capacity products. As defined in Section~\ref{SecSAA}, each product corresponds to either a firm capacity product represented by the tuple $(c_i^f,0,i)$ or an incremental flexible capacity product represented by $(c_i^c,r,i)$.

Assume that there are $m$ large-load bidders participating in SAA to obtain network capacity. For any bidder $b$, the valuation function $V_b(\cdot)$ is defined over all possible capacity product bundles $U\subseteq K$, where $v^b_j$ denotes bidder $b$'s standalone valuation for product $j$. Let $\boldsymbol{p}\in\mathbb{R}^{|K|}_+$ denote a non-negative price vector, where $p_j$ represents the price of product $j\in K$. Assuming quasi-linear preferences, bidder $b$'s net utility from acquiring a bundle $U$ given the price vector $\boldsymbol{p}$ is expressed as $V_b(U)-\sum_{j\in U}p_j$.

The optimal demand correspondence
$D_b(\boldsymbol{p})$ is defined as
\begin{equation}\label{demand}
D_b(\boldsymbol{p})
=
\arg\max_{U\subseteq K}
\big(
V_b(U)-\sum\nolimits_{j\in U}p_j
\big).
\end{equation}

\begin{definition}[Competitive Equilibrium \cite{milgrom2000saa}]\label{defCE}
A set of non-negative prices $p_k\ge0$ for products $k\in K$ constitutes a competitive equilibrium if there exists a feasible allocation $(U_1^*,U_2^*,\dots,U_m^*)$ that is supported by these prices. This supported allocation must satisfy two conditions:
\begin{subequations}\label{CE}
\begin{align}
\begin{split}
V_b(U_b^*)-\sum_{j\in U_b^*}p_j
&\ge
V_b(U)-\sum_{j\in U}p_j,\\
&\quad
\forall~U\subseteq K,\quad b=1,\dots,m,
\end{split}
\label{ce1}\\
\sum_{j\in K}p_j
&=
\sum_{b=1}^{m}\sum_{j\in U_b^*}p_j.
\label{ce2}
\end{align}
\end{subequations}
\end{definition}

The first condition~\eqref{ce1} ensures that the allocated bundle $U_b^*$ maximizes the net utility for bidder $b$ across all possible bundles of $K$, given the price $\boldsymbol{p}$. The second condition~\eqref{ce2} implies that the price of any unallocated product must be zero.

\begin{definition}[Straightforward Bidding\cite{milgrom2000saa}]
A bidder $b$ bids straightforwardly if it bids on a bundle in its demand correspondence $D_b(\boldsymbol{p})$ defined in~\eqref{demand} at each round.
\end{definition}

\begin{definition}[Gross Substitutes\cite{milgrom2000saa}]
A valuation function $V(\cdot)$ satisfies the gross substitutes condition if, for any two price vectors $\boldsymbol{p},\boldsymbol{p}'\in\mathbb{R}^{|K|}_+$ such that $\boldsymbol{p}'\ge\boldsymbol{p}$ and for any preferred bundle $U\in D(\boldsymbol{p})$, there exists a bundle $U'\in D(\boldsymbol{p}')$ such that
$
\{j\in U\mid p'_j=p_j\}\subseteq U'.
$
\end{definition}

The definition states that if a bidder optimally demands a bundle $U$ at a given price vector $\boldsymbol{p}$, and the prices of some products in $U$ increase while the prices of others remain unchanged, the bidder can continue to demand those products in $U$ whose prices do not increase.

The key question is therefore which practically relevant large-load preferences preserve the gross substitutes property and hence the competitive equilibrium guarantee of SAA. {We consider four valuation structures of the large load bidder:}

\begin{enumerate}
\item \label{additivity}
{An additive valuation: $V_b(U)=\sum_{j\in U}v^b_j$.}

\item \label{number}
{Symmetric concave valuation: $V_b(U)=f_b(|U|)$,
where $f_b:\mathbb{R}_+\rightarrow\mathbb{R}_+$ is monotone and concave.}

\item \label{unit}
{Unit-demand valuation:
$V_b(U)
=
\max_{j\in U}v^b_j$.}

\item \label{kdemand}
{$K_b$-demand valuation:
$V_b(U)
=
\max_{\substack{U'\subseteq U\\|U'|\le K_b}}
\sum_{j\in U'}v^b_j$,
where $K_b\in\{1,\dots,|K|\}$.}
\end{enumerate}

{Valuation~\ref{additivity} above} represents a large-load developer that may obtain capacity at multiple locations and values each capacity product largely independently. {Valuation~\ref{number}} represents bidders that mainly value the number of standardized capacity products obtained, with diminishing marginal value for additional products. {Valuation~\ref{unit}} represents a developer that intends to proceed with at most one project, whereas {Valuation~\ref{kdemand}} represents a developer that proceeds with up to $K_b$
projects and selects the most valuable subset of capacity products.

These four valuation structures capture representative large-load preferences while allowing substitution across capacity products as relative prices change. The following proposition shows that all four satisfy the gross substitutes property required for the equilibrium guarantees of SAA.

\begin{proposition}\label{pro:gs}
Under any of Valuations~\ref{additivity}--\ref{kdemand}, each bidder's valuation function satisfies the gross substitutes property.
\end{proposition}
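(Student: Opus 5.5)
We will verify the gross substitutes (GS) condition directly from its definition for each of Valuations~\ref{additivity}--\ref{kdemand}. Throughout, fix $\boldsymbol{p}'\ge\boldsymbol{p}$ and $U\in D_b(\boldsymbol{p})$, and write $A:=\{j\in U\mid p'_j=p_j\}$ for the products in $U$ whose prices did not rise. In each case the goal is to exhibit some $U'\in D_b(\boldsymbol{p}')$ with $A\subseteq U'$.

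\textbf{Valuation~\ref{additivity}.} The utility separates across products as $\sum_{j\in U}(v^b_j-p_j)$. Hence $U$ is demanded if and only if it contains every $j$ with $v^b_j>p_j$ and no $j$ with $v^b_j<p_j$. For $j\in A$ we have $v^b_j\ge p_j=p'_j$. Take $U':=\{j: v^b_j>p'_j\}\cup A$. Every element of $A$ has nonnegative net value at $\boldsymbol{p}'$, so $U'$ is optimal at $\boldsymbol{p}'$ and contains $A$.

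\textbf{Valuation~\ref{kdemand}, with Valuation~\ref{unit} as the case $K_b=1$.} First we reduce the bundle. Since $V_b(U)=\max_{|U''|\le K_b}\sum_{U''}v^b_j$ and prices are nonnegative, dropping items outside the maximizing subset never lowers utility. We may therefore assume $|U|\le K_b$, that $V_b(U)=\sum_{j\in U}v^b_j$, and that $D_b(\boldsymbol{p})$ consists of sets of at most $K_b$ items chosen to maximize $\sum(v^b_j-p_j)$ over items with nonnegative net value. If the demanded $U$ carried extra, valueless items, we first replace it by its optimal core. Those extra items are not needed in $U'$: if such an item $j$ is in $A$ and priced at zero, it can be added back at no cost, and a positively priced valueless item cannot lie in a demanded set at all.

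Next we characterize demand. A set $W$ with $|W|\le K_b$ is demanded at $\boldsymbol{p}$ exactly when it consists of items whose net values $w_j=v^b_j-p_j$ are among the top $K_b$ nonnegative values, with ties handled appropriately. Moving to $\boldsymbol{p}'$, each net value $w_j$ weakly decreases, and it stays unchanged for $j\in A$. Each $j\in A$ was in the top $K_b$ with $w_j\ge0$. Raising other items' prices only lowers their scores, so $j$ remains among the top $K_b$ nonnegative scores. Build $U'$ by taking $A$ and filling the remaining $K_b-|A|$ slots greedily with the best items of positive score at $\boldsymbol{p}'$. An exchange argument then shows $U'$ is optimal. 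Any optimal $W'$ at $\boldsymbol{p}'$ can swap an item outside $A$ for a missing $a\in A$ without loss, since $w'_a=w_a\ge$ the score of any item that was below $a$ at $\boldsymbol{p}$, and that score can only have fallen.

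\textbf{Valuation~\ref{number}.} Here $V_b(U)=f_b(|U|)$. For a fixed cardinality $k$, the best bundle takes the $k$ cheapest items. Concavity of $f_b$ makes the marginal values $\Delta_k=f_b(k)-f_b(k-1)$ nonincreasing. The optimal bundle is therefore the greedy one: sort items by price and add the $k$-th cheapest item while $\Delta_k\ge p_{(k)}$. This is formally the same as a $K_b$-demand structure in which the $k$-th slot carries value $\Delta_k$. The plan is to apply the same exchange argument as above. An item $j\in A$ was among the cheapest $|U|$ items at $\boldsymbol{p}$, and its rank by price can only improve, or stay the same, at $\boldsymbol{p}'$, because other prices only increase. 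Meanwhile the greedy stopping cardinality at $\boldsymbol{p}'$ is at least the number of items whose price-rank is at most $|A|$. Hence the greedy $U'$ at $\boldsymbol{p}'$, with ties broken in favor of $A$, contains $A$.

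\textbf{Main obstacle.} The delicate part is the tie handling and the cardinality comparison in Valuations~\ref{number} and~\ref{kdemand}. We must show that the optimal size at $\boldsymbol{p}'$ is at least the position rank of the elements of $A$. I would establish this by a single-item swap lemma: if an optimal $W'$ misses some $a\in A$, then either adding $a$ (when $|W'|$ is below capacity and the marginal value is at least $p'_a$) or exchanging $a$ for the worst item of $W'$ does not decrease utility. Repeating this yields an optimal bundle containing $A$. Alternatively, I could invoke the known fact that all four valuations are matroid-based (weighted) rank functions, which are known to be gross substitutes.
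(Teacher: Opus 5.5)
Your additive and $K_b$-demand/unit-demand arguments are essentially the paper's: products whose prices stay fixed keep their net value while all others can only fall, and zero-priced products outside the value-determining core are added back.

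\textbf{The gap is in the symmetric concave case.} Your sentence ``the greedy stopping cardinality at $\boldsymbol p'$ is at least the number of items whose price-rank is at most $|A|$'' only gives an optimal size of at least $|A|$. That does not put $A$ inside $U'$, because products of $U\setminus A$ whose prices rose can still be cheaper than elements of $A$. For example, take $U=\{1,2,3\}$ at prices $1,2,3$ and raise $p_1,p_2$ to $1.5,2.5$. Then $A=\{3\}$, but product $3$ has $\boldsymbol p'$-rank $3$. What you need is an optimal size at least the $\boldsymbol p'$-rank of the most expensive element of $A$. You name exactly this in your last paragraph but do not prove it.

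Reusing the $K_b$ exchange argument does not supply it. A symmetric concave valuation is not a $K_b$-demand valuation with slot values $\Delta_k$: value is attached to positions, not products, so an unchanged price does not mean an unchanged marginal contribution. Your fallback citation is also inaccurate for this class. With two products, $f(1)=2$, $f(2)=3$ is not a weighted matroid rank function: the singleton values force weights $2,2$, which gives $V(\{1,2\})\in\{2,4\}$. It is an assignment (OXS) valuation, so that is the class you would have to cite.

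\textbf{How to close the gap.} Let $m=|U|$. Optimality of $U$ at $\boldsymbol p$ gives two facts:
\begin{itemize}
\item[(i)] $p_x\ge\max_{j\in U}p_j$ for every $x\notin U$, since otherwise swapping improves;
\item[(ii)] $\Delta_m\ge\max_{j\in U}p_j$, since otherwise dropping the most expensive product improves. This is precisely the inequality the paper's proof uses.
\end{itemize}
Let $p^\star=\max_{j\in A}p_j$ and $B=A\cup\{x:p'_x<p^\star\}$.
\begin{itemize}
\item If $p'_x<p^\star$, then $p_x\le p'_x<\max_{j\in U}p_j$, so $x\in U$ by (i). Hence $B\subseteq U$ and $t:=|B|\le m$.
\item Every product outside $B$ has $\boldsymbol p'$-price at least $p^\star$. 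So $B$ is a cheapest $t$-bundle at $\boldsymbol p'$ with $p'_{(t)}=p^\star$.
\item By concavity and (ii), $\Delta_t-p^\star\ge\Delta_m-\max_{j\in U}p_j\ge0$.
\item The increments $\Delta_k-p'_{(k)}$ are nonincreasing in $k$, so some optimal size $k^\ast\ge t$ exists.
\item Then $B$ together with the $k^\ast-t$ cheapest remaining products is demanded at $\boldsymbol p'$ and contains $A$.
\end{itemize}
This treats all price increases at once. The paper instead reduces to a single-product increase and splits on $p'_i\le\Delta_m$ versus $p'_i>\Delta_m$.
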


\begin{proposition}\label{epsilonoptimal}
Let $\boldsymbol{p}^*$ be the final standing price vector and $(U_1^*,U_2^*,\dots,U_m^*)$ be the final feasible allocation. Then, if gross substitutes hold for every bidder and all bidders bid straightforwardly, $(\boldsymbol{p}^*,U_1^*,\dots,U_m^*)$ constitutes a competitive equilibrium for modified valuation functions $\hat{V}_b$, which are adjusted for each bidder $b$ as follows:
\begin{equation}\label{modifiedvaluation}
\hat{V}_b(U)
:=
V_b(U)
-
\epsilon|U\setminus U_b^*|,
\end{equation}
where $\epsilon$ is the minimum bid increment of SAA. Furthermore, the final allocation maximizes the aggregate bidder value to within a single bid-increment margin:
\begin{equation}\label{efficient}
\max_{(U_1,U_2,\dots,U_m)}
\sum_{b=1}^{m}V_b(U_b)
-
\sum_{b=1}^{m}V_b(U_b^*)
\le
\epsilon|K|.
\end{equation}
\end{proposition}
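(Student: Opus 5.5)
The argument follows the standard Milgrom SAA analysis. We first record the invariants that hold at termination, when no new bids are placed. For each product $j$, either $j$ has a standing high bidder $b$ with $j\in U_b^*$ and $p_j^*$ equal to that bid, or $j$ received no bid and $p_j^*=0$. This settles the market-clearing condition~\eqref{ce2}: every unallocated product carries zero price. Following Milgrom, we use \emph{bidder-specific} effective prices. Bidder $b$ faces $p_j^*$ on products in $U_b^*$, where it is already the standing high bidder and need not raise its bid. On every other product it faces $p_j^*+\epsilon$, the minimum acceptable new bid.

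The core step shows that $U_b^*$ is a demanded bundle at these effective prices; we argue by induction over rounds. The claim is that, under straightforward bidding and gross substitutes, the set of products on which $b$ is the standing high bidder is always contained in some bundle of $D_b$ at $b$'s current effective prices. When $b$ bids straightforwardly, it picks a bundle in $D_b$ that contains its standing items. In the next round, the effective prices of the items $b$ still holds are unchanged, while other prices only weakly rise. Gross substitutes, which Proposition~\ref{pro:gs} supplies for Valuations~\ref{additivity}--\ref{kdemand}, then gives a new demanded bundle that still contains the held items whose prices did not change. Items $b$ loses are simply dropped from the invariant. At termination $b$ submits no new bid, so the demanded bundle coincides with the held set $U_b^*$. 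Hence $U_b^*\in D_b$ at effective prices $p^*+\epsilon\mathbf{1}_{K\setminus U_b^*}$.

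We then translate this into~\eqref{ce1} for $\hat V_b$. Optimality at effective prices reads
\[
V_b(U_b^*)-\textstyle\sum_{j\in U_b^*}p_j^* \ge V_b(U)-\sum_{j\in U}p_j^*-\epsilon|U\setminus U_b^*| \quad \forall U.
\]
This is exactly~\eqref{ce1} with $\hat V_b$ from~\eqref{modifiedvaluation}, since $\hat V_b(U_b^*)=V_b(U_b^*)$. Together with~\eqref{ce2}, the triple $(\boldsymbol{p}^*,U_1^*,\dots,U_m^*)$ is a competitive equilibrium for $\hat V$.

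For the efficiency bound~\eqref{efficient}, take any feasible allocation $(U_1,\dots,U_m)$ and sum the inequalities above over $b$. This gives
\[
\textstyle\sum_b V_b(U_b^*) \ge \sum_b V_b(U_b) - \sum_b\sum_{j\in U_b}p_j^* + \sum_b\sum_{j\in U_b^*}p_j^* - \epsilon\sum_b|U_b\setminus U_b^*|.
\]
Disjointness gives $\sum_b\sum_{j\in U_b}p_j^*\le\sum_{j\in K}p_j^*$, which equals $\sum_b\sum_{j\in U_b^*}p_j^*$ by~\eqref{ce2}, so the two price terms together are nonnegative. The sets $U_b\setminus U_b^*$ are disjoint subsets of $K$, so the last term is at most $\epsilon|K|$, which yields~\eqref{efficient}.

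The main obstacle is the inductive invariant. Gross substitutes is stated for a common price vector, but here each bidder faces its own effective prices, and products shift between the "held" and "unheld" categories as ownership changes. We therefore apply gross substitutes only between consecutive rounds for a fixed bidder $b$. Between rounds, $b$'s effective prices move in only two ways. For items $b$ continues to hold, the price stays put. For all other items, the price weakly increases, including items $b$ just lost, whose price jumps to the new standing bid plus $\epsilon$. This gives the monotone comparison $\boldsymbol{p}'\ge\boldsymbol{p}$ that gross substitutes requires.
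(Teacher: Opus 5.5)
Your proposal is correct and follows essentially the same route as the paper's appendix. Both use bidder-specific prices ($p_j$ on held products, $p_j+\epsilon$ otherwise) and an induction via gross substitutes showing that some demanded bundle always contains the current holdings, so $U_b^*$ is demanded at termination, which is exactly \eqref{ce1} for $\hat V_b$. Both then use zero prices on unsold products for \eqref{ce2}, and sum over a feasible comparison allocation with disjointness to get \eqref{efficient}. One detail worth stating explicitly: newly won products also keep their personalized price, because the bidder bid exactly the minimum increment, and that is why gross substitutes between rounds retains them too.
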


\begin{theorem}\label{th:gs}
If the gross substitutes condition holds for every bidder, a competitive equilibrium exists. Under this condition, if all bidders bid straightforwardly, the final SAA outcome converges to a competitive equilibrium as $\epsilon\rightarrow0$.
\end{theorem}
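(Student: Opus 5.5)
The plan is to obtain existence as a limit of the approximate equilibria supplied by Proposition~\ref{epsilonoptimal}, using only finiteness of the bundle space and compactness of bounded prices. Convergence of the SAA outcome then comes out of the same argument.

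\textbf{Step 1: uniform bounds.} Fix a sequence of increments $\epsilon_n\downarrow 0$. For each $n$, run SAA with straightforward bidding. Since every bidder's valuation satisfies gross substitutes, Proposition~\ref{epsilonoptimal} gives a final price vector $\boldsymbol{p}^{(n)}$ and a feasible allocation $(U_1^{(n)},\dots,U_m^{(n)})$. This triple is a competitive equilibrium for the modified valuations $\hat V_b$ in~\eqref{modifiedvaluation}. We then bound the prices uniformly in $n$. A standing high bid is only placed by a straightforward bidder who demands the item at that price. Such a bidder would never pay more for an item than its largest marginal value plus one increment. Hence $0\le p^{(n)}_j\le \max_{b,U}V_b(U)+\epsilon_n$, so all $\boldsymbol{p}^{(n)}$ lie in a compact box.

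\textbf{Step 2: pass to a subsequence.} There are only finitely many feasible allocations of $K$ among $m$ bidders. So we can take a subsequence along which the allocation is constant, say $(U_1^*,\dots,U_m^*)$, and along which $\boldsymbol{p}^{(n)}\to\boldsymbol{p}^*$.

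\textbf{Step 3: verify~\eqref{ce1} in the limit.} For each $n$, condition~\eqref{ce1} holds for $\hat V_b$:
\[
V_b(U_b^*)-\sum\nolimits_{j\in U_b^*}p^{(n)}_j\ \ge\ V_b(U)-\epsilon_n|U\setminus U_b^*|-\sum\nolimits_{j\in U}p^{(n)}_j .
\]
Both sides are continuous in $(\boldsymbol{p},\epsilon)$, and the inequality ranges over finitely many $U$. Letting $n\to\infty$ therefore gives~\eqref{ce1} for the true $V_b$ at $\boldsymbol{p}^*$.

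\textbf{Step 4: verify~\eqref{ce2} in the limit.} This is the step I expect to be the main obstacle, because it depends on how SAA prices unsold items. In SAA, an item that never receives a bid keeps its reserve (zero) price. Once an item receives a bid, it always has a standing high bidder. So along the subsequence, every item outside $\bigcup_b U_b^*$ has $p^{(n)}_j=0$, hence $p^*_j=0$ and~\eqref{ce2} holds. If this bookkeeping turned out to be delicate, I would instead argue directly from~\eqref{ce2} for each $n$. That identity is linear in $\boldsymbol{p}$ for the fixed allocation, so it survives the limit.

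\textbf{Conclusion.} Steps 3 and 4 show that $(\boldsymbol{p}^*,U_1^*,\dots,U_m^*)$ is a competitive equilibrium, which proves existence. For convergence, the argument applies to every subsequence. Every limit point of SAA outcomes as $\epsilon\to0$ is therefore a competitive equilibrium, and the distance from the SAA outcome to the equilibrium set tends to zero. Consistent with this, the welfare bound~\eqref{efficient} shows that the efficiency gap vanishes as $\epsilon\to0$.
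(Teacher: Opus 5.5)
Your proposal is correct and follows essentially the same route as the paper's appendix proof: a uniform bound on terminal prices, a subsequence along which the allocation is constant and the prices converge, and passage to the limit both in the $\epsilon$-modified utility inequality and in market clearing, which gives existence and shows that every accumulation point is a competitive equilibrium. The paper also proves finite termination of SAA for each fixed $\epsilon$, which you leave implicit in invoking Proposition~\ref{epsilonoptimal}; your Step~1 bound already gives it, since every round with a new bid raises some price by at least $\epsilon$ and prices stay bounded.
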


Together, Propositions~\ref{pro:gs}--\ref{epsilonoptimal} and Theorem~\ref{th:gs} provide the theoretical foundation for the proposed value-based capacity allocation. For the four valuation structures identified above, a competitive equilibrium exists; under straightforward bidding, the SAA outcome converges to such an equilibrium as $\epsilon\rightarrow0$, while for a finite bid increment the loss in aggregate bidder value is bounded by $\epsilon|K|$.

The proof of Proposition~\ref{pro:gs} is provided in
Appendix~\ref{proofprogs}. Proposition~\ref{epsilonoptimal} and
Theorem~\ref{th:gs} follow from the established SAA results in
\cite{milgrom2000saa}; details are provided in
Appendix~\ref{app:auctionproof}.
\subsubsection*{Computational Tractability of Straightforward Bidding}

Under straightforward bidding, a large-load developer solves the demand
problem in~\eqref{demand} in each SAA round.
Although the general demand problem in~\eqref{demand} contains $2^{|K|}$ possible bundles, exhaustive enumeration is not required under the four valuation structures considered above. Additive and unit-demand valuations can be optimized in $O(|K|)$ time. For a $K_b$-demand valuation, the bidder sorts the individual net values $v^b_j-p_j$ and selects up to $K_b$ products with the largest positive values. For a symmetric concave valuation, the products are sorted by price, and the bidder compares the surplus of each possible bundle size. Both cases require $O(|K|\log|K|)$ time. Thus, straightforward bidding remains computationally tractable for the same valuation classes.

\section{Simulations}\label{section4}
{We evaluate the proposed framework on a standard transmission
benchmark. The experiments quantify the firm and flexible capacities available
under different risk levels and evaluate the allocation of the resulting capacity
products.}

\subsection{Experimental Setup}

We use the standard PGLib-OPF IEEE RTS 73-bus benchmark \cite{pglib}, a
meshed transmission system with \numNarea{} areas. The fixed background
net-demand profile $\boldsymbol l$ is constructed as the difference between the
nodal load and a fixed generation profile. Specifically, each in-service
generator's PGLib setpoint is multiplied by a common scaling factor
of $1.3$ and then clipped to the generator's original limits. The resulting
generation profile remains fixed throughout the capacity calculations. Existing nodal loads are sampled independently from truncated normal
distributions centered at their nominal values, with a standard deviation of
\numSigmaPct{}\% of the nominal load and truncation at
$\pm\numBandPct{}\%$, using \numNs{} scenarios for \eqref{cvar}. Thus,
the uncertainty in $\boldsymbol l$ is represented through the nodal-load
scenarios, while the generation profile remains fixed across scenarios.

Three candidate large-load locations are selected by a predetermined rule:
the largest-load bus in each area, giving bus IDs \numBusA{}, \numBusB{}, and
\numBusC{}, which carry the same nominal load of \numPdCand{}~MW.
These numbers are bus IDs rather than the total number of buses. Under the RTS
area-based numbering convention, the 100-, 200-, and 300-series bus IDs
correspond to Areas 1, 2, and 3, respectively. The aggregate new request is \numReqPct{}\% of the nominal system load of
\numDsys{}~MW, that is \numDreq{}~MW, and is divided equally so that
$d_i=\numdi$~MW at each location. The main flexible-service setting is
$r=\numRiskPct{}\%$. All optimization models are implemented in Python and
solved using Gurobi 13.0.3.

\subsection{Firm and Flexible Capacity}

\begin{table}[t]
  \centering
  \caption{Firm and incremental flexible large-load capacities with
  $r=\numRiskPct\%$. All quantities in MW.}
  \label{tab:capacity}
  \setlength{\tabcolsep}{6pt}
  \begin{tabular}{lrrrr}
    \toprule
     & Bus 118 & Bus 218 & Bus 318 & Total \\
    \midrule
     $c_i^f$
      & 191.4 & 321.5 & 259.2 & 772.1 \\
     $c_i^c$
      & 165.4 & 74.2 & 117.9 & 357.6 \\
    \bottomrule
  \end{tabular}
  \vspace{-1.5em}
\end{table}

{Let $C^f:=\sum_i c_i^f$, $C^r:=\sum_i c_i^r$, and
$C^c:=\sum_i c_i^c$} denote the aggregate firm, risk-aware, and incremental
flexible capacities over the three candidate buses, respectively.
Table~\ref{tab:capacity} reports the outcome at the three locations. Under
the all-scenario security requirement, the network accommodates
$C^f=\numCf{}$~MW of firm capacity. Accepting a \numRiskPct{}\% interruption
risk on the incremental service raises total admitted capacity to
$C^r=\numCr{}$~MW, so the incremental flexible capacity is
$C^c=\numCc{}$~MW, an increase of \numGainPct{}\% over the firm baseline.
The increments differ across locations even though the three requests are
identical. On this case, the value of a flexible connection is therefore
location-dependent.

Because the benchmark is meshed, Proposition~\ref{merge} does not directly apply. We therefore also solve the
corresponding capacity models by directly maximizing aggregate admitted
capacity. For the firm model, the demand-normalized solution admits
$772.1$~MW, compared with $878.8$~MW under direct aggregate-capacity
maximization, attaining $88.0\%$ of the aggregate maximum. At $r=5\%$, the
corresponding capacities are $1129.7$~MW and $1161.6$~MW, respectively, so
the demand-normalized solution attains $97.2\%$ of the aggregate maximum.
Although exact equivalence does not hold for this meshed benchmark, the resulting
solutions remain close to the aggregate-capacity maxima.

\begin{figure}[t]
  \centering
  \includegraphics[width=0.6\columnwidth]{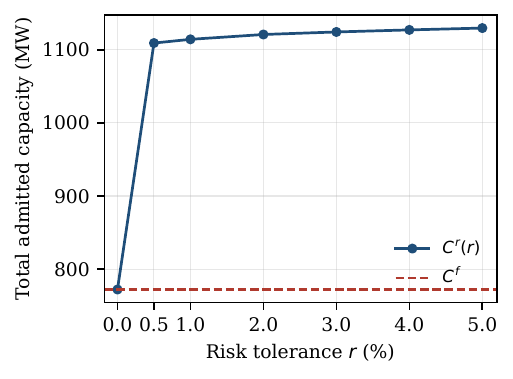}
  \vspace{-1.5em}
  \caption{Total admitted capacity $C^r$ against the prescribed risk level $r$.}
  \label{fig:risk}
  \vspace{-1.5em}
\end{figure}

Fig.~\ref{fig:risk} traces total admitted capacity against the prescribed
risk level. The computed capacity increases over the tested range, from
\numCrOne{}~MW at $r=1\%$ to \numCrFive{}~MW at $r=5\%$. Most of the
additional hosting capacity is obtained by moving from the all-scenario firm
requirement to a small nonzero interruption-risk tolerance, whereas further
relaxation yields diminishing additional capacity.

\subsection{Value-Based Allocation of Capacity Products}

The solved capacities define the supply side of the auction directly: each
location contributes a firm product $(c_i^f,0,i)$ and an incremental flexible
product $(c_i^c,r,i)$, giving $|K|=\numNprod{}$ products whose capacities are
exactly the $c_i^f$ and $c_i^c$ entries of Table~\ref{tab:capacity}. This same
set of products is offered to four large-load bidders under each of the four valuation structures of Section~\ref{section3}. Standalone values
are built from each product's capacity, location, and risk attribute, so
bidders differ in their preferred location and in their sensitivity to
interruption risk. The bid increment is $\epsilon=\numEps{}$, and in each
round, each bidder's straightforward bid is obtained using the
valuation-specific demand procedure in Section~\ref{section3}.

\begin{table}[h]
  \centering
  \vspace{-1.0em}
  \caption{SAA outcomes for the four valuation structures on the
  network-derived capacity products.}
  \label{tab:auction}
  \setlength{\tabcolsep}{3.5pt}
  \begin{tabular}{lrrrc}
    \toprule
    Valuation & Rounds & $V^{\rm SAA}$ & $V^{\rm OPT}$ & CE\\
    \midrule
    Additive & 81 & 346.13 & 346.13 & Yes \\
    Symmetric concave & 106 & 455.57 & 455.57 & Yes \\
    Unit-demand & 60 & 284.32 & 284.32 & Yes \\
    $K_b$-demand & 120 & 346.13 & 346.13 & Yes \\
    \bottomrule
  \end{tabular}
\end{table}
 \vspace{-1em}
Table~\ref{tab:auction} summarizes the outcomes under the four valuation
structures. ``Rounds'' denotes the number of bidding rounds;
$V^{\rm SAA}$ is the aggregate bidder value achieved by the SAA allocation,
and $V^{\rm OPT}$ is the centralized benchmark obtained by maximizing the
aggregate bidder value over all feasible allocations. ``CE'' indicates whether
the finite-$\epsilon$ competitive-equilibrium conditions under the modified
valuations $\hat V_b$ are satisfied.
Under each of Valuations~\ref{additivity}--\ref{kdemand} in Section~\ref{sec:value}, the corresponding
valuation structure satisfies the gross substitutes property by
Proposition~\ref{pro:gs}. The modified-valuation equilibrium conditions are also verified in all four cases.
\section{Conclusions}

We demonstrate that combining firm and risk-aware flexible capacity assessment
with value-based allocation can substantially increase the hosting capacity of
existing transmission networks for large loads. By allowing eligible large loads
to accept limited interruption risk, the proposed framework can unlock latent
capacity without immediate network expansion, while the resulting firm and
flexible capacity products can be allocated among competing applicants through
the SAA. For radial networks, we show that the demand-normalized capacity
assessment does not sacrifice aggregate admitted capacity, while the meshed benchmark
shows that the resulting solutions remain close to the corresponding
aggregate-capacity maxima. Furthermore, for four
practically relevant valuation structures, the auction supports tractable bidding
and provides competitive-equilibrium guarantees. The framework therefore
provides a practical and theoretically grounded approach to utilizing scarce
transmission capacity for growing large-load interconnection. Future work will
incorporate generation costs, network-expansion costs, and the associated
system-level economic trade-offs.

\bibliographystyle{IEEEtran}
\bibliography{ref}
\section*{Acknowledgment}
The authors thank Prof.~Junjie Qin (Purdue University) and Dr.~Zhenyu Fan (PJM)
for their insightful comments.
\section*{Appendix}
\setcounter{subsection}{0}
\renewcommand{\thesubsection}{\Alph{subsection}}
\subsection{Definition of the Path Matrix}\label{definition}
For the capacity results, consider a radial network rooted at reference bus 1. With branches oriented away from the root, its path matrix $\boldsymbol A\in\mathbb R^{|J|\times|N|}$ has entries
\begin{equation}\label{eq:pathmatrix}
A_{e,i}=\begin{cases}
1,&e\text{ lies on the path from bus 1 to bus }i,\\
0,&\text{otherwise}.
\end{cases}
\end{equation}
Under the DC power flow convention for withdrawals, the shift-factor matrix is $\boldsymbol S=\boldsymbol A$ \cite{low2022power}. The downstream sets $D_e:=\{i:A_{e,i}=1\}$ are laminar: any two are disjoint or one contains the other. Let $I=N\setminus\Omega$ and fix $c_i=0$ for $i\in\Omega$ throughout.

\subsection{Proof of Proposition~\ref{merge}}\label{proofth1}
To handle both capacity assessments, consider the polyhedron
\begin{equation}\label{eq:laminarpoly}
\begin{split}
\mathcal P_I=\{\boldsymbol c:\;& c_i=0\ (i\in\Omega),\quad
\boldsymbol a\le\boldsymbol c\le\overline{\boldsymbol c},\\
&\underline{\boldsymbol\beta}\le\boldsymbol A\boldsymbol c
\le\overline{\boldsymbol\beta}\},
\end{split}
\end{equation}
where $\boldsymbol a\ge\boldsymbol0$ is a fixed lower bound. For the firm model, $\boldsymbol a=\boldsymbol0$, $\overline{\boldsymbol c}=\overline{\boldsymbol q}-\overline{\boldsymbol l}$, and the line bounds are the constants in~\eqref{ro_reform}. Assume that the model is feasible and the relevant optimum exists.

\begin{lemma}\label{lem:augmentation}
If $\boldsymbol c,\widehat{\boldsymbol c}\in\mathcal P_I$ and
$\sum_{i\in I}\widehat c_i>\sum_{i\in I}c_i$, then there exist $j\in I$ and $\eta>0$ such that $\boldsymbol c+\eta\boldsymbol e_j\in\mathcal P_I$, where $\boldsymbol e_j$ is the $j$th unit vector.
\end{lemma}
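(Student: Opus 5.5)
We argue by contradiction: suppose that for every $j\in I$ no positive step $\eta\boldsymbol e_j$ keeps $\boldsymbol c$ in $\mathcal P_I$. We will exhibit a partition of $I$ into blocks, each of which has a tight upper constraint at $\boldsymbol c$. The laminar structure of the sets $D_e$ then forces $\sum_{i\in I}\widehat c_i\le\sum_{i\in I}c_i$.

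\textbf{Step 1: identify the blocking constraints.} Increasing $c_j$ only raises the coordinate $c_j$ and the line flows $(\boldsymbol A\boldsymbol c)_e$ for $e$ with $j\in D_e$. These entries are nonnegative because $\boldsymbol A$ is a $0$--$1$ matrix. The lower bounds $\boldsymbol a\le\boldsymbol c$ and $\underline{\boldsymbol\beta}\le\boldsymbol A\boldsymbol c$ therefore stay satisfied. Since $\mathcal P_I$ is defined by finitely many linear inequalities, a positive step fails exactly when some upper constraint involving $j$ is already tight. Hence for each $j\in I$, either $c_j=\overline c_j$, or there is a line $e$ with $j\in D_e$ and $(\boldsymbol A\boldsymbol c)_e=\overline\beta_e$.

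\textbf{Step 2: build a disjoint cover.} Collect the tight lines $T=\{e:(\boldsymbol A\boldsymbol c)_e=\overline\beta_e\}$ and let $T^{\max}\subseteq T$ be those $e$ whose $D_e$ is maximal by inclusion among $\{D_e\}_{e\in T}$. By laminarity, the sets $D_e$ for $e\in T^{\max}$ are pairwise disjoint. Let $R=I\setminus\bigcup_{e\in T^{\max}}D_e$. By Step 1, every $j\in R$ satisfies $c_j=\overline c_j$. So $I$ is partitioned into the blocks $D_e\cap I$ for $e\in T^{\max}$ and the singletons $\{j\}$ for $j\in R$.

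\textbf{Step 3: compare sums blockwise.} Since $c_i=\widehat c_i=0$ on $\Omega$, we have $(\boldsymbol A\boldsymbol c)_e=\sum_{i\in D_e\cap I}c_i$, and likewise for $\widehat{\boldsymbol c}$. For each $e\in T^{\max}$, feasibility of $\widehat{\boldsymbol c}$ gives
\[
\sum_{i\in D_e\cap I}\widehat c_i\le\overline\beta_e=\sum_{i\in D_e\cap I}c_i .
\]
For $j\in R$ we have $\widehat c_j\le\overline c_j=c_j$. Summing over the partition yields $\sum_{i\in I}\widehat c_i\le\sum_{i\in I}c_i$, which contradicts the hypothesis. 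Hence some $j$ and $\eta>0$ exist.

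\textbf{Main obstacle.} The delicate point is Step 2. One needs the downstream sets of tight lines to be laminar, so that selecting the maximal ones gives a genuine disjoint cover with no double counting. One also needs the elements not covered to be blocked by their own box bound. Both facts follow from the tree structure and the choice of $\boldsymbol A$ as the path matrix, but the bookkeeping requires care. In particular, the identity $\boldsymbol S=\boldsymbol A$ for withdrawals with the reference bus at the root must be used in the direction that makes all entries nonnegative. The same argument applies verbatim to the flexible model with $\boldsymbol a=\boldsymbol c^f$, since lower bounds never block an increase.
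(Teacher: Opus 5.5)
Your proof is correct and is essentially the paper's argument: in both, the inclusion-maximal (equivalently, rootmost) tight lines have pairwise disjoint downstream sets by laminarity, and summing their tight constraints against the feasibility of $\widehat{\boldsymbol c}$ contradicts $\sum_{i\in I}\widehat c_i>\sum_{i\in I}c_i$. The only difference is bookkeeping. The paper works over $I_+=\{i\in I:\widehat c_i>c_i\}$, where $c_j<\widehat c_j\le\overline c_j$ automatically rules out box-tight coordinates, so it needs no singleton blocks $R$, and it gives an explicit step $0<\eta\le\widehat c_j-c_j$ capped by the line slacks on $j$'s root path; you instead partition all of $I$ and obtain $\eta$ from the positive slack of the finitely many upper constraints involving $j$.
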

\noindent\emph{Proof.}
Let $\boldsymbol\Delta=\widehat{\boldsymbol c}-\boldsymbol c$ and $I_+=\{i\in I:\Delta_i>0\}$. For any branch $e$ whose upper constraint is tight at $\boldsymbol c$, feasibility of $\widehat{\boldsymbol c}$ gives
\begin{equation}\label{eq:tightsum}
\sum_{i\in D_e}\Delta_i\le0.
\end{equation}
Suppose every $i\in I_+$ has a tight branch on its root path. For each such $i$, select its tight branch closest to the root. The distinct selected branches have pairwise disjoint downstream sets: if one were an ancestor of another, the latter would not be the rootmost choice on that path. These sets cover $I_+$, so summing~\eqref{eq:tightsum} gives a nonpositive sum of $\Delta_i$ on their union. Outside the union, there are no positive components, contradicting $\sum_{i\in I}\Delta_i>0$.

Consequently, some $j\in I_+$ has no tight upper branch on its path $E_j$. Choose
\begin{equation}\label{eq:augmentationstep}
0<\eta\le\min\left\{\widehat c_j-c_j,
\min_{e\in E_j}\big[\overline\beta_e-(\boldsymbol A\boldsymbol c)_e\big]\right\},
\end{equation}
with the inner minimum equal to $+\infty$ for an empty path. Only coordinate $j$ increases, and $c_j+\eta\le\widehat c_j\le\overline c_j$; all other nodal upper bounds are unchanged. The path slacks preserve every line upper bound. Since $\boldsymbol A\ge0$, increasing a coordinate preserves the line lower bounds and the fixed floor $\boldsymbol a$. Thus $\boldsymbol c+\eta\boldsymbol e_j\in\mathcal P_I$.\hfill$\square$

\noindent\emph{Proof of Proposition~\ref{merge}.}
Write $F(\boldsymbol c)=\sum_{i\in I}((d_i-c_i)/d_i)^2$. If an optimizer $\boldsymbol c^*$ of the firm model did not maximize total capacity, Lemma~\ref{lem:augmentation} would provide a feasible increase in one coordinate $j$. By the proposition's scarcity assumption, $c_j^*<d_j$. Decrease $\eta$, if necessary, so that $0<\eta\le d_j-c_j^*$. Then
\begin{equation}
F(\boldsymbol c^*+\eta\boldsymbol e_j)-F(\boldsymbol c^*)
=\frac{\eta^2-2\eta(d_j-c_j^*)}{d_j^2}<0,
\end{equation}
contradicting optimality. Hence the same solution maximizes total firm capacity.\hfill$\square$

\subsection{Proof of Proposition~\ref{pro2}}\label{app:flexproof}
Translation equivariance of CVaR converts~\eqref{cvar_1}--\eqref{cvar_3} into
\begin{subequations}\label{eq:cvar-affine}
\begin{align}
\boldsymbol{c^r}&\le\overline{\boldsymbol q}-\CVaR_\alpha[\boldsymbol l],\\
\boldsymbol S\boldsymbol{c^r}&\le\overline{\boldsymbol b}-\CVaR_\alpha[\boldsymbol S\boldsymbol l],\\
\boldsymbol S\boldsymbol{c^r}&\ge\underline{\boldsymbol b}+\CVaR_\alpha[-\boldsymbol S\boldsymbol l].
\end{align}
\end{subequations}
For a radial network, this is~\eqref{eq:laminarpoly} with $\boldsymbol a=\boldsymbol{c^f}$ and adjusted constant upper and lower bounds. The augmentation lemma still holds with this nonzero floor. The proof of Proposition~\ref{merge} therefore applies under $c_i^{r\star}<d_i$. Because $\boldsymbol{c^f}$ is fixed, maximizing $\sum_i c_i^r$ and maximizing $\sum_i(c_i^r-c_i^f)$ are equivalent.\hfill$\square$

For completeness, the preserved-baseline model is feasible whenever the firm model is feasible, and the distribution used for CVaR is supported on the firm uncertainty box. Each firm constraint bounds its corresponding random quantity almost surely. Monotonicity of CVaR then gives the same bound on its CVaR, so $\boldsymbol{c^r}=\boldsymbol{c^f}$ satisfies~\eqref{cvar}. This statement also holds for an empirical distribution whose scenarios lie in that box.

\subsection{Scenario Approximation of the Flexible-Capacity Assessment}\label{app:reform}
For an integrable scalar random quantity $X$, the upper-tail CVaR is \cite{rockafellar2002conditional}
\begin{equation}\label{eq:cvar-definition}
\CVaR_\alpha[X]=\min_{\zeta\in\mathbb R}
\left\{\zeta+\frac{1}{1-\alpha}\mathbb E[(X-\zeta)^+]\right\}.
\end{equation}
Take $N_s$ background scenarios $\boldsymbol l[s]$, $s\in\mathcal S:=\{1,\ldots,N_s\}$, with equal empirical weights. Define the residuals
\begin{align*}
\boldsymbol g^{\rm cap}_s&:=\boldsymbol{c^r}+\boldsymbol l[s]-\overline{\boldsymbol q},\\
\boldsymbol g^{\rm up}_s&:=\boldsymbol S(\boldsymbol{c^r}+\boldsymbol l[s])-\overline{\boldsymbol b},\\
\boldsymbol g^{\rm low}_s&:=\underline{\boldsymbol b}-\boldsymbol S(\boldsymbol{c^r}+\boldsymbol l[s]).
\end{align*}
The empirical counterpart of~\eqref{cvar} is
\begin{subequations}\label{eq:empirical-cvar}
\begin{align}
\min_{\boldsymbol{c^r},\boldsymbol\zeta,\boldsymbol z}\quad&
\sum_{i\in I}\left(\frac{d_i-c_i^r}{d_i}\right)^2\\
\text{s.t.}\quad&\boldsymbol{c^r}\ge\boldsymbol{c^f},\quad c_i^r=0\quad(i\in\Omega),\\
&\boldsymbol\zeta^a+\frac{1}{(1-\alpha)N_s}
\sum_{s\in\mathcal S}\boldsymbol z_s^a\le\boldsymbol0,
\quad a\in\mathcal A,\\
&\boldsymbol z_s^a\ge\boldsymbol g_s^a-\boldsymbol\zeta^a,\quad
\boldsymbol z_s^a\ge\boldsymbol0,
\quad a\in\mathcal A,\ s\in\mathcal S,
\end{align}
\end{subequations}
where $\mathcal A=\{\mathrm{cap},\mathrm{up},\mathrm{low}\}$ and each $\boldsymbol\zeta^a$ is unrestricted. Equivalently, the empirical CVaR constants in~\eqref{eq:cvar-affine} can be evaluated before solving for capacity. This is an empirical implementation, not a distribution-free out-of-sample certificate. Here, \emph{SAA} always denotes the \emph{simultaneous ascending auction}, not a sample-average approximation method.

\subsection{Gross Substitutes for the Four Valuation Classes}\label{proofprogs}
We give a direct verification of the valuation claim in Proposition~\ref{pro:gs}. Throughout, prices are nonnegative and ties are allowed.

\noindent\emph{Additive valuations.}
Under Valuation~\ref{additivity}, utility is $\sum_{j\in U}(v_j^b-p_j)$. Every demanded bundle includes the positive-surplus products, excludes negative-surplus products, and may contain zero-surplus products. If prices increase, any previously demanded product whose price is unchanged still has nonnegative surplus and can be retained in an optimal bundle. This proves gross substitutes.

\medskip\noindent\emph{Symmetric concave valuations.}
Suppress $b$ and set $n=|K|$ and $\Delta_k=f(k)-f(k-1)$, so $\Delta_1\ge\cdots\ge\Delta_n\ge0$. An optimal bundle of size $k$ can be chosen from the $k$ cheapest products. The quantity objective has increments $\Delta_k-p_{(k)}$, which are nonincreasing in $k$.

It suffices to handle a price increase of one product $i$ and then apply the argument successively. Let $U$ be any demanded bundle at $\boldsymbol p$. If $i\notin U$, the utility of $U$ is unchanged while competing utilities cannot rise, so $U$ remains demanded. Otherwise write $m=|U|$ and $T=U\setminus\{i\}$. The case $T=\emptyset$ imposes no retention requirement. For $m\ge2$, optimality implies that $U$ comprises the $m$ cheapest products, up to ties, and that
\[
\max_{j\in U}p_j\le\Delta_m,
\]
by comparing $U$ with the bundle obtained by deleting its most expensive product.

If $p_i'\le\Delta_m$, there are at least $m$ products priced at most $\Delta_m$. The unchanged products in $T$ can be included among the $m$ cheapest products, because outside products were no cheaper than them before the increase. Since $\Delta_m-p'_{(m)}\ge0$, there exists an optimal quantity of at least $m$, and an optimal cheapest-product bundle can retain $T$.

If $p_i'>\Delta_m$, the set $T$ can be chosen among the cheapest $m-1$ products. Moreover,
$p'_{(m-1)}\le\max_{j\in T}p_j\le\Delta_m\le\Delta_{m-1}$.
Thus, an optimal quantity can be chosen to be at least $m-1$, again retaining $T$ through tie-breaking. This establishes the single-price property, and successive increases establish gross substitutes for any $\boldsymbol p'\ge\boldsymbol p$.

\medskip
\noindent\emph{$K_b$-demand and unit-demand valuations.}
Let $k=K_b$ and define the net value of product $j$ at prices
$\boldsymbol p$ as
\[
s_j=v_j-p_j.
\]
Under Valuation~\ref{kdemand}, a bidder's utility maximization problem is
equivalent to
\begin{equation}\label{eq:capped-surplus}
\max_{\substack{A\subseteq K\\ |A|\le k}}
\sum_{j\in A}s_j .
\end{equation}
Hence, an optimal bundle can be obtained by selecting up to $k$ products
with the largest nonnegative net values $s_j$.

Now consider two price vectors $\boldsymbol p'\ge\boldsymbol p$, and let
$U\in D(\boldsymbol p)$ be a demanded bundle. Let $W\subseteq U$,
$|W|\le k$, denote the products in $U$ that determine its $K_b$-demand
value. Because $U$ is optimal, $W$ consists of products with the largest
nonnegative net values at $\boldsymbol p$. Moreover, any product in
$U\setminus W$ must have zero price; otherwise removing it would strictly
increase the bidder's utility.

Consider any product $j\in W$ whose price does not change, i.e.,
$p'_j=p_j$. Its net value remains unchanged:
\[
v_j-p'_j=v_j-p_j.
\]
By contrast, the net value of every product whose price increases can only
decrease. Therefore, an optimal set $W'$ of at most $k$ products at
$\boldsymbol p'$ can be chosen to retain every product in $W$ whose price
did not change.

Finally, if $j\in U\setminus W$ and $p'_j=p_j$, then
$p_j=p'_j=0$. Such a product can be added to $W'$ without reducing utility.
Thus there exists a demanded bundle $U'\in D(\boldsymbol p')$ such that
\[
\{j\in U:p'_j=p_j\}\subseteq U'.
\]
Therefore the $K_b$-demand valuation satisfies the gross substitutes
property. The unit-demand valuation is the special case $K_b=1$, so the same argument
establishes Valuation~\ref{unit}.
\hfill$\square$

\subsection{Demand Computation with Standing Commitments}\label{app:demand}
Under the SAA rule, a bidder that does not already hold product $j$ must bid one increment $\epsilon$ above the standing high bid. Thus, during a round, the bidder faces the personalized price
\begin{equation}\label{eq:personal-price}
\widetilde p^{\,b}_j=
\begin{cases}
p_j, & \text{if $b$ is the standing high bidder on $j$},\\
p_j+\epsilon, & \text{otherwise}.
\end{cases}
\end{equation}
The preceding algorithms apply to a price vector. We now justify their use with the personalized prices in~\eqref{eq:personal-price} and a required standing bundle $H_b$.

First, a demanded superset of $H_b$ exists throughout straightforward bidding. Initially $H_b=\emptyset$. Suppose a bidder selects a demanded bundle containing its current holdings and places minimum bids on its new products. After the round, its new holdings are a subset of that selected bundle. Their personalized prices have not changed, whereas its other personalized prices can only increase. Gross substitutes therefore yields a new demanded bundle retaining all new holdings. This proves the invariant by induction, and also shows why standing bids are not withdrawn.

For additive values, retain $H_b$ and select positive-surplus products outside it; this takes $O(n)$ time. For symmetric concave values, sort the outside products by personalized price and scan
\[
f_b(|H_b|+t)-\sum_{j\in H_b}p_j
-\sum_{\ell=1}^{t}\widetilde p_{(\ell)}^b,
\quad 0\le t\le n-|H_b|.
\]
Sorting costs $O(n\log n)$ and the scan costs $O(n)$.

For $K_b$-demand, all standing prices in this zero-start, positive-increment auction are strictly positive. Since $H_b$ is contained in a demanded bundle, its size is at most $K_b$; otherwise a positive-price product unused in the value-maximizing $K_b$-subset could be removed to improve utility. Likewise, a demanded extension can be chosen to have at most $K_b$ products. Retain $H_b$, sort the remaining singleton net values, and select up to $K_b-|H_b|$ nonnegative entries. This is $O(n\log n)$. For unit-demand, either retain the single standing product or scan for a best nonnegative singleton surplus if the bidder has no holdings, giving $O(n)$ time.

\subsection{Finite-Increment and Limiting Allocation Guarantees}\label{app:auctionproof}
We include the argument underlying the established SAA conclusions \cite{milgrom2000saa} to make their application to the present products explicit. Let $M=\max_b V_b(K)<\infty$. Under straightforward bidding, any product receiving a new bid must have a personalized price no larger than the bidder's marginal value for that product in its demanded bundle, and hence no larger than $M$. Each bidding round with a new bid increases at least one standing price by at least $\epsilon$. Bounded prices and finitely many products imply finite termination.

At termination, $H_b=U_b^*$ and no new bid is submitted, so the demand bundle selected by bidder $b$ under straightforward bidding is its final bundle. Consequently, for any $U\subseteq K$,
\begin{equation}\label{eq:terminal-utility}
\begin{split}
V_b(U_b^*)-\sum_{j\in U_b^*}p_j^*
\ge V_b(U)-\sum_{j\in U}p_j^*
-\epsilon|U\setminus U_b^*|.
\end{split}
\end{equation}
This is precisely utility maximization for~\eqref{modifiedvaluation}. Unsold products have zero prices, so market clearing~\eqref{ce2} holds as well.

Apply~\eqref{eq:terminal-utility} to any feasible comparison allocation $(U_1,\ldots,U_m)$ and sum over bidders. The final allocation collects all positive product prices, while a comparison allocation can collect no more. Also, disjointness implies $\sum_b|U_b\setminus U_b^*|\le n$. Rearranging yields~\eqref{efficient}.

Finally, consider a sequence $\epsilon\downarrow0$. All terminal price vectors lie in the compact box $[0,M]^n$, and there are finitely many allocations. Thus a subsequence has a constant allocation and convergent prices. Taking its limit in~\eqref{eq:terminal-utility} removes the increment penalty, yielding~\eqref{ce1}; market clearing also passes to the limit. Every accumulation point is therefore a competitive equilibrium for the original valuations. This establishes existence as well, without claiming uniqueness of the limiting prices or allocation.\hfill$\square$

\onecolumn
\subsection{Numerical Auction Data and Verification}\label{app:verification}
This appendix reports the inputs and complete outcomes of the four SAA experiments
corresponding to the four valuation classes stated in Section~\ref{section3}. The
quantities of the firm and flexible capacity products are taken directly from the
capacity solution reported in Table~\ref{tab:capacity}; no product quantity is
modified before the auction.

\emph{Products and valuation inputs.} Table~\ref{tab:app-products} lists the
$|K|=\numNprod$ capacity products, one firm and one incremental flexible product
at each of the three candidate locations. Table~\ref{tab:app-values} gives the
standalone values $v_k^b$ used by the additive, unit-demand, and $K_b$-demand
settings. These values are obtained from each product's capacity, a fixed
location-preference matrix, and a bidder-specific sensitivity to the risk attribute,
and are then normalized to a maximum of $100$. Table~\ref{tab:app-fvals} gives
the symmetric concave
valuations $f_b$; each is the sum of that bidder's $q$ largest standalone
values, which makes it monotone and discretely concave by construction and keeps
it on the same value scale. The bid increment is $\epsilon=\numEps$ throughout,
and $K_b=\numKb$ in the $K_b$-demand setting.

\begin{table}[H]
\centering\small

\caption{Capacity products entering the auction, taken directly from the
$r=\numRiskPct\%$ solution of Section~\ref{section4}.}\label{tab:app-products}
\setlength{\tabcolsep}{12pt}
\begin{tabular}{clrrr}
\toprule
Product & Type & Bus & Capacity (MW) & Risk\\\midrule
1 & firm & 118 & 191.4 & 0 \\
2 & firm & 218 & 321.5 & 0 \\
3 & firm & 318 & 259.2 & 0 \\
4 & flexible & 118 & 165.4 & 0.05 \\
5 & flexible & 218 & 74.2 & 0.05 \\
6 & flexible & 318 & 117.9 & 0.05 \\
\bottomrule
\end{tabular}
\end{table}

\begin{table}[H]
\centering\small

\caption{Product values $v_k^b$, normalised to a maximum
of $100$.}\label{tab:app-values}
\setlength{\tabcolsep}{10pt}
\begin{tabular}{crrrrrr}
\toprule
Bidder & $k={1}$ & $k={2}$ & $k={3}$ & $k={4}$ & $k={5}$ & $k={6}$\\\midrule
1 & 59.52 & 75.00 & 40.30 & 50.16 & 16.88 & 17.88 \\
2 & 35.71 & 100.00 & 60.45 & 29.32 & 21.93 & 26.13 \\
3 & 44.64 & 50.00 & 80.60 & 35.69 & 10.68 & 33.93 \\
4 & 53.57 & 80.00 & 56.42 & 41.67 & 16.62 & 23.11 \\
\bottomrule
\end{tabular}
\end{table}

\begin{table}[H]
\centering\small

\caption{Symmetric concave valuations $f_b(q)$, $q=1,\dots,6$, with
$f_b(0)=0$.}\label{tab:app-fvals}
\setlength{\tabcolsep}{10pt}
\begin{tabular}{crrrrrr}
\toprule
Bidder & $q={1}$ & $q={2}$ & $q={3}$ & $q={4}$ & $q={5}$ & $q={6}$\\\midrule
1 & 75.00 & 134.52 & 184.68 & 224.98 & 242.86 & 259.74 \\
2 & 100.00 & 160.45 & 196.16 & 225.48 & 251.62 & 273.55 \\
3 & 80.60 & 130.60 & 175.24 & 210.93 & 244.85 & 255.53 \\
4 & 80.00 & 136.42 & 189.99 & 231.66 & 254.76 & 271.38 \\
\bottomrule
\end{tabular}
\end{table}

\clearpage
\emph{Bidding trajectories.} Tables~\ref{tab:app-rounds-additive}--%
\ref{tab:app-rounds-kdemand} report the complete round-by-round histories of all
four settings in a common format. Column $B_b$
lists the products on which bidder $b$ submits a new bid in round $t$;
$\boldsymbol p^{(t)}$ and ``holders'' are the standing high bids and standing
high bidders at the close of the round. The last row is the terminating round,
in which no bidder submits a new bid. Ties within a bidder's demand correspondence are resolved in favor of the
smaller bundle, so a bidder bids on a further product only when doing so strictly
increases its surplus; ties across bidders for the same product are resolved using
an alternating-priority rule. Each bid shown was generated by the polynomial-time
demand routine of the corresponding valuation class in Section~\ref{section3};
the bundle enumeration below is used only for ex-post verification.
\begingroup
\fontsize{8.5}{10.2}\selectfont
\setlength{\tabcolsep}{5pt}
\renewcommand{\arraystretch}{1.04}
\setlength{\LTleft}{\fill}
\setlength{\LTright}{\fill}
\setlength{\LTcapwidth}{\textwidth}
\begin{longtable}{@{}rcccccc@{}}
\caption{Complete SAA trajectory: Additive valuations.}\label{tab:app-rounds-additive}\\
\toprule
$t$ & $B_1$ & $B_2$ & $B_3$ & $B_4$ & $\boldsymbol p^{(t)}$ & Holders \\
\midrule
\endfirsthead
\multicolumn{7}{c}{\tablename~\thetable{} (continued)}\\
\toprule
$t$ & $B_1$ & $B_2$ & $B_3$ & $B_4$ & $\boldsymbol p^{(t)}$ & Holders \\
\midrule
\endhead
\midrule
\multicolumn{7}{r}{\emph{Continued on the next page}}\\
\endfoot
\bottomrule
\endlastfoot
1 & 1,2,3,4,5,6 & 1,2,3,4,5,6 & 1,2,3,4,5,6 & 1,2,3,4,5,6 & $(1,1,1,1,1,1)$ & $(1,2,3,4,1,2)$ \\
2 & 2,3,4,6 & 1,3,4,5 & 1,2,4,5,6 & 1,2,3,5,6 & $(2,2,2,2,2,2)$ & $(3,4,1,2,3,4)$ \\
3 & 1,2,4,5,6 & 1,2,3,5,6 & 2,3,4,6 & 1,3,4,5 & $(3,3,3,3,3,3)$ & $(1,2,3,4,1,2)$ \\
4 & 2,3,4,6 & 1,3,4,5 & 1,2,4,5,6 & 1,2,3,5,6 & $(4,4,4,4,4,4)$ & $(3,4,1,2,3,4)$ \\
5 & 1,2,4,5,6 & 1,2,3,5,6 & 2,3,4,6 & 1,3,4,5 & $(5,5,5,5,5,5)$ & $(1,2,3,4,1,2)$ \\
6 & 2,3,4,6 & 1,3,4,5 & 1,2,4,5,6 & 1,2,3,5,6 & $(6,6,6,6,6,6)$ & $(3,4,1,2,3,4)$ \\
7 & 1,2,4,5,6 & 1,2,3,5,6 & 2,3,4,6 & 1,3,4,5 & $(7,7,7,7,7,7)$ & $(1,2,3,4,1,2)$ \\
8 & 2,3,4,6 & 1,3,4,5 & 1,2,4,5,6 & 1,2,3,5,6 & $(8,8,8,8,8,8)$ & $(3,4,1,2,3,4)$ \\
9 & 1,2,4,5,6 & 1,2,3,5,6 & 2,3,4,6 & 1,3,4,5 & $(9,9,9,9,9,9)$ & $(1,2,3,4,1,2)$ \\
10 & 2,3,4,6 & 1,3,4,5 & 1,2,4,5,6 & 1,2,3,5,6 & $(10,10,10,10,10,10)$ & $(3,4,1,2,3,4)$ \\
11 & 1,2,4,5,6 & 1,2,3,5,6 & 2,3,4,6 & 1,3,4,5 & $(11,11,11,11,11,11)$ & $(1,2,3,4,1,2)$ \\
12 & 2,3,4,6 & 1,3,4,5 & 1,2,4,6 & 1,2,3,5,6 & $(12,12,12,12,12,12)$ & $(3,4,1,2,4,1)$ \\
13 & 1,2,4,5 & 1,2,3,5,6 & 2,3,4,6 & 1,3,4,6 & $(13,13,13,13,13,13)$ & $(2,3,4,1,2,3)$ \\
14 & 1,2,3,5,6 & 2,3,4,6 & 1,3,4 & 1,2,4,5,6 & $(14,14,14,14,14,14)$ & $(4,1,2,3,4,1)$ \\
15 & 1,3,4,5 & 1,2,4,5,6 & 1,2,3,6 & 2,3,4,6 & $(15,15,15,15,15,15)$ & $(2,3,4,1,2,3)$ \\
16 & 1,2,3,5,6 & 2,3,4,6 & 1,3,4 & 1,2,4,5,6 & $(16,16,16,16,16,16)$ & $(4,1,2,3,4,1)$ \\
17 & 1,3,4 & 1,2,4,5,6 & 1,2,3,6 & 2,3,4,6 & $(17,17,17,17,17,17)$ & $(2,3,4,1,2,2)$ \\
18 & 1,2,3 & 2,3,4 & 1,3,4,6 & 1,2,4,6 & $(18,18,18,18,17,18)$ & $(3,4,1,2,2,3)$ \\
19 & 1,2,4 & 1,2,3,6 & 2,3,4 & 1,3,4,6 & $(19,19,19,19,17,19)$ & $(4,1,2,3,2,4)$ \\
20 & 1,3,4 & 1,2,4,6 & 1,2,3,6 & 2,3,4 & $(20,20,20,20,17,20)$ & $(1,2,3,4,2,2)$ \\
21 & 2,3,4 & 1,3,4 & 1,2,4,6 & 1,2,3,6 & $(21,21,21,21,17,21)$ & $(3,4,1,2,2,3)$ \\
22 & 1,2,4 & 1,2,3,6 & 2,3,4 & 1,3,4,6 & $(22,22,22,22,17,22)$ & $(4,1,2,3,2,4)$ \\
23 & 1,3,4 & 1,2,4,6 & 1,2,3,6 & 2,3,4 & $(23,23,23,23,17,23)$ & $(1,2,3,4,2,2)$ \\
24 & 2,3,4 & 1,3,4 & 1,2,4,6 & 1,2,3 & $(24,24,24,24,17,24)$ & $(3,4,1,2,2,3)$ \\
25 & 1,2,4 & 1,2,3,6 & 2,3,4 & 1,3,4 & $(25,25,25,25,17,25)$ & $(4,1,2,3,2,2)$ \\
26 & 1,3,4 & 1,2,4 & 1,2,3,6 & 2,3,4 & $(26,26,26,26,17,26)$ & $(1,2,3,4,2,3)$ \\
27 & 2,3,4 & 1,3,4 & 1,2,4 & 1,2,3 & $(27,27,27,27,17,26)$ & $(2,3,4,1,2,3)$ \\
28 & 1,2,3 & 2,3,4 & 1,3,4 & 1,2,4 & $(28,28,28,28,17,26)$ & $(3,4,1,2,2,3)$ \\
29 & 1,2,4 & 1,2,3 & 2,3,4 & 1,3,4 & $(29,29,29,29,17,26)$ & $(4,1,2,3,2,3)$ \\
30 & 1,3,4 & 1,2 & 1,2,3 & 2,3,4 & $(30,30,30,30,17,26)$ & $(1,2,3,4,2,3)$ \\
31 & 2,3,4 & 1,3 & 1,2,4 & 1,2,3 & $(31,31,31,31,17,26)$ & $(2,3,4,1,2,3)$ \\
32 & 1,2,3 & 2,3 & 1,3,4 & 1,2,4 & $(32,32,32,32,17,26)$ & $(3,4,1,3,2,3)$ \\
33 & 1,2,4 & 1,2,3 & 2,3 & 1,3,4 & $(33,33,33,33,17,26)$ & $(4,1,2,4,2,3)$ \\
34 & 1,3,4 & 1,2 & 1,2,3,4 & 2,3 & $(34,34,34,34,17,26)$ & $(1,2,3,1,2,3)$ \\
35 & 2,3 & 1,3 & 1,2,4 & 1,2,3,4 & $(35,35,35,35,17,26)$ & $(2,3,4,3,2,3)$ \\
36 & 1,2,3,4 & 2,3 & 1,3 & 1,2,4 & $(36,36,36,36,17,26)$ & $(4,1,2,4,2,3)$ \\
37 & 1,3,4 & 2 & 1,2,3 & 2,3 & $(37,37,37,37,17,26)$ & $(1,2,3,1,2,3)$ \\
38 & 2,3 & 3 & 1,2 & 1,2,3,4 & $(38,38,38,38,17,26)$ & $(4,1,2,4,2,3)$ \\
39 & 1,3,4 & 2 & 1,2,3 & 2,3 & $(39,39,39,39,17,26)$ & $(3,4,1,1,2,3)$ \\
40 & 1,2 & 2,3 & 2,3 & 1,3,4 & $(40,40,40,40,17,26)$ & $(4,1,2,4,2,3)$ \\
41 & 1,4 & 2 & 1,2,3 & 2,3 & $(41,41,41,41,17,26)$ & $(3,4,3,1,2,3)$ \\
42 & 1,2 & 2,3 & 2 & 1,3 & $(42,42,42,41,17,26)$ & $(4,1,2,1,2,3)$ \\
43 & 1 & 2 & 1,2,3 & 2,3 & $(43,43,43,41,17,26)$ & $(3,4,3,1,2,3)$ \\
44 & 1,2 & 2,3 & 2 & 1,3 & $(44,44,44,41,17,26)$ & $(4,1,2,1,2,3)$ \\
45 & 1 & 2 & 2,3 & 2,3 & $(45,45,45,41,17,26)$ & $(1,3,4,1,2,3)$ \\
46 & 2 & 2,3 & 3 & 1,2 & $(46,46,46,41,17,26)$ & $(4,1,2,1,2,3)$ \\
47 & 1 & 2 & 2,3 & 2,3 & $(47,47,47,41,17,26)$ & $(1,3,4,1,2,3)$ \\
48 & 2 & 2,3 & 3 & 1,2 & $(48,48,48,41,17,26)$ & $(4,1,2,1,2,3)$ \\
49 & 1 & 2 & 2,3 & 2,3 & $(49,49,49,41,17,26)$ & $(1,3,4,1,2,3)$ \\
50 & 2 & 2,3 & 3 & 1,2 & $(50,50,50,41,17,26)$ & $(4,1,2,1,2,3)$ \\
51 & 1 & 2 & 3 & 2,3 & $(51,51,51,41,17,26)$ & $(1,4,3,1,2,3)$ \\
52 & 2 & 2,3 & -- & 1,3 & $(52,52,52,41,17,26)$ & $(4,1,2,1,2,3)$ \\
53 & 1 & 2 & 3 & 2,3 & $(53,53,53,41,17,26)$ & $(1,4,3,1,2,3)$ \\
54 & 2 & 2,3 & -- & 3 & $(53,54,54,41,17,26)$ & $(1,1,2,1,2,3)$ \\
55 & -- & 2 & 3 & 2,3 & $(53,55,55,41,17,26)$ & $(1,4,3,1,2,3)$ \\
56 & 2 & 2,3 & -- & 3 & $(53,56,56,41,17,26)$ & $(1,1,2,1,2,3)$ \\
57 & -- & 2 & 3 & 2 & $(53,57,57,41,17,26)$ & $(1,4,3,1,2,3)$ \\
58 & 2 & 2,3 & -- & -- & $(53,58,58,41,17,26)$ & $(1,1,2,1,2,3)$ \\
59 & -- & 2 & 3 & 2 & $(53,59,59,41,17,26)$ & $(1,2,3,1,2,3)$ \\
60 & 2 & 3 & -- & 2 & $(53,60,60,41,17,26)$ & $(1,4,2,1,2,3)$ \\
61 & 2 & 2 & 3 & -- & $(53,61,61,41,17,26)$ & $(1,1,3,1,2,3)$ \\
62 & -- & 2 & -- & 2 & $(53,62,61,41,17,26)$ & $(1,2,3,1,2,3)$ \\
63 & 2 & -- & -- & 2 & $(53,63,61,41,17,26)$ & $(1,4,3,1,2,3)$ \\
64 & 2 & 2 & -- & -- & $(53,64,61,41,17,26)$ & $(1,1,3,1,2,3)$ \\
65 & -- & 2 & -- & 2 & $(53,65,61,41,17,26)$ & $(1,2,3,1,2,3)$ \\
66 & 2 & -- & -- & 2 & $(53,66,61,41,17,26)$ & $(1,4,3,1,2,3)$ \\
67 & 2 & 2 & -- & -- & $(53,67,61,41,17,26)$ & $(1,1,3,1,2,3)$ \\
68 & -- & 2 & -- & 2 & $(53,68,61,41,17,26)$ & $(1,2,3,1,2,3)$ \\
69 & 2 & -- & -- & 2 & $(53,69,61,41,17,26)$ & $(1,4,3,1,2,3)$ \\
70 & 2 & 2 & -- & -- & $(53,70,61,41,17,26)$ & $(1,1,3,1,2,3)$ \\
71 & -- & 2 & -- & 2 & $(53,71,61,41,17,26)$ & $(1,2,3,1,2,3)$ \\
72 & 2 & -- & -- & 2 & $(53,72,61,41,17,26)$ & $(1,4,3,1,2,3)$ \\
73 & 2 & 2 & -- & -- & $(53,73,61,41,17,26)$ & $(1,1,3,1,2,3)$ \\
74 & -- & 2 & -- & 2 & $(53,74,61,41,17,26)$ & $(1,2,3,1,2,3)$ \\
75 & -- & -- & -- & 2 & $(53,75,61,41,17,26)$ & $(1,4,3,1,2,3)$ \\
76 & -- & 2 & -- & -- & $(53,76,61,41,17,26)$ & $(1,2,3,1,2,3)$ \\
77 & -- & -- & -- & 2 & $(53,77,61,41,17,26)$ & $(1,4,3,1,2,3)$ \\
78 & -- & 2 & -- & -- & $(53,78,61,41,17,26)$ & $(1,2,3,1,2,3)$ \\
79 & -- & -- & -- & 2 & $(53,79,61,41,17,26)$ & $(1,4,3,1,2,3)$ \\
80 & -- & 2 & -- & -- & $(53,80,61,41,17,26)$ & $(1,2,3,1,2,3)$ \\
81 & -- & -- & -- & -- & $(53,80,61,41,17,26)$ & $(1,2,3,1,2,3)$ \\
\end{longtable}
\endgroup
\begingroup
\fontsize{8.5}{10.2}\selectfont
\setlength{\tabcolsep}{5pt}
\renewcommand{\arraystretch}{1.04}
\setlength{\LTleft}{\fill}
\setlength{\LTright}{\fill}
\setlength{\LTcapwidth}{\textwidth}
\begin{longtable}{@{}rcccccc@{}}
\caption{Complete SAA trajectory: Symmetric concave valuations.}\label{tab:app-rounds-symmetric}\\
\toprule
$t$ & $B_1$ & $B_2$ & $B_3$ & $B_4$ & $\boldsymbol p^{(t)}$ & Holders \\
\midrule
\endfirsthead
\multicolumn{7}{c}{\tablename~\thetable{} (continued)}\\
\toprule
$t$ & $B_1$ & $B_2$ & $B_3$ & $B_4$ & $\boldsymbol p^{(t)}$ & Holders \\
\midrule
\endhead
\midrule
\multicolumn{7}{r}{\emph{Continued on the next page}}\\
\endfoot
\bottomrule
\endlastfoot
1 & 1,2,3,4,5,6 & 1,2,3,4,5,6 & 1,2,3,4,5,6 & 1,2,3,4,5,6 & $(1,1,1,1,1,1)$ & $(1,2,3,4,1,2)$ \\
2 & 2,3,4,6 & 1,3,4,5 & 1,2,4,5,6 & 1,2,3,5,6 & $(2,2,2,2,2,2)$ & $(3,4,1,2,3,4)$ \\
3 & 1,2,4,5,6 & 1,2,3,5,6 & 2,3,4,6 & 1,3,4,5 & $(3,3,3,3,3,3)$ & $(1,2,3,4,1,2)$ \\
4 & 2,3,4,6 & 1,3,4,5 & 1,2,4,5,6 & 1,2,3,5,6 & $(4,4,4,4,4,4)$ & $(3,4,1,2,3,4)$ \\
5 & 1,2,4,5,6 & 1,2,3,5,6 & 2,3,4,6 & 1,3,4,5 & $(5,5,5,5,5,5)$ & $(1,2,3,4,1,2)$ \\
6 & 2,3,4,6 & 1,3,4,5 & 1,2,4,5,6 & 1,2,3,5,6 & $(6,6,6,6,6,6)$ & $(3,4,1,2,3,4)$ \\
7 & 1,2,4,5,6 & 1,2,3,5,6 & 2,3,4,6 & 1,3,4,5 & $(7,7,7,7,7,7)$ & $(1,2,3,4,1,2)$ \\
8 & 2,3,4,6 & 1,3,4,5 & 1,2,4,5,6 & 1,2,3,5,6 & $(8,8,8,8,8,8)$ & $(3,4,1,2,3,4)$ \\
9 & 1,2,4,5,6 & 1,2,3,5,6 & 2,3,4,6 & 1,3,4,5 & $(9,9,9,9,9,9)$ & $(1,2,3,4,1,2)$ \\
10 & 2,3,4,6 & 1,3,4,5 & 1,2,4,5,6 & 1,2,3,5,6 & $(10,10,10,10,10,10)$ & $(3,4,1,2,3,4)$ \\
11 & 1,2,4,5,6 & 1,2,3,5,6 & 2,3,4 & 1,3,4,5 & $(11,11,11,11,11,11)$ & $(1,2,3,4,1,2)$ \\
12 & 2,3,4,6 & 1,3,4,5 & 1,2,4,5 & 1,2,3,5,6 & $(12,12,12,12,12,12)$ & $(3,4,1,2,3,4)$ \\
13 & 1,2,4,5,6 & 1,2,3,5,6 & 2,3,4 & 1,3,4,5 & $(13,13,13,13,13,13)$ & $(1,2,3,4,1,2)$ \\
14 & 2,3,4,6 & 1,3,4,5 & 1,2,4,5 & 1,2,3,5,6 & $(14,14,14,14,14,14)$ & $(3,4,1,2,3,4)$ \\
15 & 1,2,4,5,6 & 1,2,3,5,6 & 2,3,4 & 1,3,4,5 & $(15,15,15,15,15,15)$ & $(1,2,3,4,1,2)$ \\
16 & 2,3,4,6 & 1,3,4,5 & 1,2,4,5 & 1,2,3,5,6 & $(16,16,16,16,16,16)$ & $(3,4,1,2,3,4)$ \\
17 & 1,2,4,5 & 1,2,3,5,6 & 2,3,4 & 1,3,4 & $(17,17,17,17,17,17)$ & $(1,2,3,4,1,2)$ \\
18 & 2,3 & 1,3,4,5 & 1,2,4,5 & 1,2,3,5 & $(18,18,18,18,18,17)$ & $(2,3,4,2,3,2)$ \\
19 & 1,2,3,6 & 2,3,5 & 1,3,6 & 1,2,4,6 & $(19,19,19,19,19,18)$ & $(4,1,2,4,2,3)$ \\
20 & 1,3,6 & 1,2,4,6 & 1,2,3,4 & 2,3,6 & $(20,20,20,20,19,19)$ & $(1,2,3,2,2,4)$ \\
21 & 2,5,6 & 1,3,6 & 1,2,5,6 & 1,2,3,5 & $(21,21,21,20,20,20)$ & $(2,3,4,2,1,2)$ \\
22 & 1,4,6 & 2,5 & 1,4,5,6 & 1,4,5,6 & $(22,22,21,21,21,21)$ & $(3,2,4,4,2,3)$ \\
23 & 3,4,5,6 & 3,4,6 & 3,4,5 & 1,5,6 & $(23,22,22,22,22,22)$ & $(4,2,1,2,3,4)$ \\
24 & 2,4,5 & 3,5,6 & 2,3,4,6 & 2,3,4 & $(23,23,23,23,23,23)$ & $(4,1,2,3,1,2)$ \\
25 & 1,3 & 1,2,4 & 1,2,3,5 & 2,3,4 & $(24,24,24,24,24,23)$ & $(3,4,1,2,3,2)$ \\
26 & 1,2,6 & 1,2,3 & 2,3,6 & 1,3,6 & $(25,25,25,24,24,24)$ & $(4,1,2,2,3,3)$ \\
27 & 4,5,6 & 1,5,6 & 1,2,4 & 4,5,6 & $(26,26,25,25,25,25)$ & $(2,3,2,3,4,1)$ \\
28 & 3,4,5 & 4,5,6 & 3,5,6 & 3,4,6 & $(26,26,26,26,26,26)$ & $(2,3,3,4,1,2)$ \\
29 & 1,2,3 & 2,3 & 1,4,5 & 1,2,3 & $(27,27,27,27,27,26)$ & $(3,4,1,3,3,2)$ \\
30 & 1,2,6 & 1,2,3 & 2,6 & 1,3,6 & $(28,28,28,27,27,27)$ & $(2,3,4,3,3,1)$ \\
31 & 1,4,5 & 4,5,6 & 1,6 & 4,5,6 & $(29,28,28,28,28,28)$ & $(3,3,4,4,1,2)$ \\
32 & 2,3,4 & 2,3,4 & 3,4,5 & 2,5 & $(29,29,29,29,29,28)$ & $(3,4,1,2,3,2)$ \\
33 & 1,2,6 & 1 & 2,3,6 & 1,3,6 & $(30,30,30,29,29,29)$ & $(4,1,3,2,3,4)$ \\
34 & 4,5,6 & 5,6 & 1,4,6 & 4,5 & $(31,30,30,30,30,30)$ & $(3,1,3,1,2,3)$ \\
35 & 3,5 & 2,3 & 2,4 & 2,3,4,5 & $(31,31,31,31,31,30)$ & $(3,4,1,3,4,3)$ \\
36 & 1,2,6 & 1,2,6 & 2,3 & 1,6 & $(32,32,32,31,31,31)$ & $(1,2,3,3,4,4)$ \\
37 & 4,5,6 & 4,5 & 1,5,6 & 1,4 & $(33,32,32,32,32,32)$ & $(3,2,3,4,1,3)$ \\
38 & 2,3,4 & 3,4 & 2,4 & 2,3,5 & $(33,33,33,33,33,32)$ & $(3,4,1,2,4,3)$ \\
39 & 1,2,6 & 1,6 & 2,3 & 1,6 & $(34,34,34,33,33,33)$ & $(4,1,3,2,4,2)$ \\
40 & 4,5,6 & 5 & 4,5,6 & 4,6 & $(34,34,34,34,34,34)$ & $(4,1,3,3,1,3)$ \\
41 & 1,3 & 1,2,3 & 1 & 2,3,4 & $(35,35,35,35,34,34)$ & $(1,2,4,4,1,3)$ \\
42 & 2,6 & 5,6 & 1,5 & 5,6 & $(36,36,35,35,35,35)$ & $(3,1,4,4,2,4)$ \\
43 & 3,4,5 & 3 & 3,4 & 5 & $(36,36,36,36,36,35)$ & $(3,1,1,3,4,4)$ \\
44 & 1,6 & 1,6 & 6 & 1,2 & $(37,37,36,36,36,36)$ & $(1,4,1,3,4,2)$ \\
45 & 4,5 & 3 & 3,5 & 3,4 & $(37,37,37,37,37,36)$ & $(1,4,3,4,1,2)$ \\
46 & 2,6 & 1 & 1,6 & 1,6 & $(38,38,37,37,37,37)$ & $(2,1,3,4,1,3)$ \\
47 & 3,4 & 3 & 4 & 3,5,6 & $(38,38,38,38,38,38)$ & $(2,1,4,1,4,4)$ \\
48 & 1,3 & 2 & 1,2,3 & 1 & $(39,39,39,38,38,38)$ & $(3,2,3,1,4,4)$ \\
49 & 1,5,6 & 4 & 4 & 1,4 & $(40,39,39,39,39,39)$ & $(4,2,3,2,1,1)$ \\
50 & 2,3 & -- & 2,4 & 2,3,4 & $(40,40,40,40,39,39)$ & $(4,3,4,3,1,1)$ \\
51 & 1 & 5,6 & 5 & 5,6 & $(41,40,40,40,40,40)$ & $(1,3,4,3,4,2)$ \\
52 & 2,3 & 2 & 3 & 2,4 & $(41,41,41,41,40,40)$ & $(1,4,1,4,4,2)$ \\
53 & 5 & 5 & 1,5,6 & 6 & $(42,41,41,41,41,41)$ & $(3,4,1,4,2,3)$ \\
54 & 2,4 & 2 & 2 & 3 & $(42,42,42,42,41,41)$ & $(3,1,4,1,2,3)$ \\
55 & 5 & 6 & 5 & 5,6 & $(42,42,42,42,42,42)$ & $(3,1,4,1,3,4)$ \\
56 & 1 & 1,2 & 2 & 1 & $(43,43,42,42,42,42)$ & $(1,2,4,1,3,4)$ \\
57 & 3 & 3 & 3,4 & 4 & $(43,43,43,43,42,42)$ & $(1,2,3,4,3,4)$ \\
58 & 5,6 & 5 & 6 & 5 & $(43,43,43,43,43,43)$ & $(1,2,3,4,1,3)$ \\
59 & 2 & 1 & 1 & 1,2 & $(44,44,43,43,43,43)$ & $(4,1,3,4,1,3)$ \\
60 & 3 & 3,4 & 4 & 3 & $(44,44,44,44,43,43)$ & $(4,1,2,3,1,3)$ \\
61 & 6 & 5 & 5 & 5,6 & $(44,44,44,44,44,44)$ & $(4,1,2,3,4,1)$ \\
62 & 1 & 1 & 1 & 2 & $(45,45,44,44,44,44)$ & $(2,4,2,3,4,1)$ \\
63 & 3,4 & -- & 3 & 3 & $(45,45,45,45,44,44)$ & $(2,4,3,1,4,1)$ \\
64 & 5 & 5 & 5 & 6 & $(45,45,45,45,45,45)$ & $(2,4,3,1,1,4)$ \\
65 & 1 & 2 & 1 & 1 & $(46,46,45,45,45,45)$ & $(3,2,3,1,1,4)$ \\
66 & 3 & 3 & -- & 3,4 & $(46,46,46,46,45,45)$ & $(3,2,4,4,1,4)$ \\
67 & 1,6 & 5 & 5 & -- & $(47,46,46,46,46,46)$ & $(1,2,4,4,2,1)$ \\
68 & 2 & -- & 2,3 & 2 & $(47,47,47,46,46,46)$ & $(1,3,3,4,2,1)$ \\
69 & 4 & 4 & -- & 5,6 & $(47,47,47,47,47,47)$ & $(1,3,3,1,4,4)$ \\
70 & 2 & 1,2 & -- & 1 & $(48,48,47,47,47,47)$ & $(2,1,3,1,4,4)$ \\
71 & 3 & 3 & 4 & 3 & $(48,48,48,48,47,47)$ & $(2,1,2,3,4,4)$ \\
72 & 5,6 & -- & 5 & 1 & $(49,48,48,48,48,48)$ & $(4,1,2,3,3,1)$ \\
73 & 3 & 2 & -- & 2,3 & $(49,49,49,48,48,48)$ & $(4,4,1,3,3,1)$ \\
74 & 4 & 4,5 & -- & 4 & $(49,49,49,49,49,48)$ & $(4,4,1,2,2,1)$ \\
75 & 1 & -- & 6 & 6 & $(50,49,49,49,49,49)$ & $(1,4,1,2,2,3)$ \\
76 & 2 & -- & -- & 3,4 & $(50,50,50,50,49,49)$ & $(1,1,4,4,2,3)$ \\
77 & 5 & 6 & -- & 5 & $(50,50,50,50,50,50)$ & $(1,1,4,4,4,2)$ \\
78 & -- & 1 & 1 & -- & $(51,50,50,50,50,50)$ & $(2,1,4,4,4,2)$ \\
79 & 3 & -- & 2 & -- & $(51,51,51,50,50,50)$ & $(2,3,1,4,4,2)$ \\
80 & 4 & -- & -- & 6 & $(51,51,51,51,50,51)$ & $(2,3,1,1,4,4)$ \\
81 & -- & 5 & -- & 1 & $(52,51,51,51,51,51)$ & $(4,3,1,1,2,4)$ \\
82 & -- & 2 & -- & 2 & $(52,52,51,51,51,51)$ & $(4,4,1,1,2,4)$ \\
83 & -- & 3 & 3 & -- & $(52,52,52,51,51,51)$ & $(4,4,2,1,2,4)$ \\
84 & 5 & -- & 4 & -- & $(52,52,52,52,52,51)$ & $(4,4,2,3,1,4)$ \\
85 & 6 & 6 & -- & -- & $(52,52,52,52,52,52)$ & $(4,4,2,3,1,1)$ \\
86 & -- & 1 & -- & 3 & $(53,52,53,52,52,52)$ & $(2,4,4,3,1,1)$ \\
87 & -- & 2 & -- & 4 & $(53,53,53,53,52,52)$ & $(2,2,4,4,1,1)$ \\
88 & -- & -- & 5 & 5 & $(53,53,53,53,53,52)$ & $(2,2,4,4,3,1)$ \\
89 & 1 & -- & -- & 6 & $(54,53,53,53,53,53)$ & $(1,2,4,4,3,4)$ \\
90 & 2 & 3 & -- & -- & $(54,54,54,53,53,53)$ & $(1,1,2,4,3,4)$ \\
91 & -- & 4 & -- & -- & $(54,54,54,54,53,53)$ & $(1,1,2,2,3,4)$ \\
92 & -- & -- & -- & 5 & $(54,54,54,54,54,53)$ & $(1,1,2,2,4,4)$ \\
93 & -- & -- & 6 & -- & $(54,54,54,54,54,54)$ & $(1,1,2,2,4,3)$ \\
94 & -- & -- & -- & 1 & $(55,54,54,54,54,54)$ & $(4,1,2,2,4,3)$ \\
95 & 3 & -- & -- & -- & $(55,54,55,54,54,54)$ & $(4,1,1,2,4,3)$ \\
96 & -- & 2 & -- & -- & $(55,55,55,54,54,54)$ & $(4,2,1,2,4,3)$ \\
97 & 4 & -- & -- & -- & $(55,55,55,55,54,54)$ & $(4,2,1,1,4,3)$ \\
98 & -- & 5 & -- & -- & $(55,55,55,55,55,54)$ & $(4,2,1,1,2,3)$ \\
99 & -- & -- & -- & 6 & $(55,55,55,55,55,55)$ & $(4,2,1,1,2,4)$ \\
100 & -- & -- & 1 & -- & $(56,55,55,55,55,55)$ & $(3,2,1,1,2,4)$ \\
101 & -- & -- & -- & 2 & $(56,56,55,55,55,55)$ & $(3,4,1,1,2,4)$ \\
102 & -- & 3 & -- & -- & $(56,56,56,55,55,55)$ & $(3,4,2,1,2,4)$ \\
103 & 5 & -- & -- & -- & $(56,56,56,55,56,55)$ & $(3,4,2,1,1,4)$ \\
104 & -- & 4 & -- & -- & $(56,56,56,56,56,55)$ & $(3,4,2,2,1,4)$ \\
105 & 6 & -- & -- & -- & $(56,56,56,56,56,56)$ & $(3,4,2,2,1,1)$ \\
106 & -- & -- & -- & -- & $(56,56,56,56,56,56)$ & $(3,4,2,2,1,1)$ \\
\end{longtable}
\endgroup
\begingroup
\fontsize{8.5}{10.2}\selectfont
\setlength{\tabcolsep}{5pt}
\renewcommand{\arraystretch}{1.04}
\setlength{\LTleft}{\fill}
\setlength{\LTright}{\fill}
\setlength{\LTcapwidth}{\textwidth}
\begin{longtable}{@{}rcccccc@{}}
\caption{Complete SAA trajectory: Unit-demand valuations.}\label{tab:app-rounds-unit}\\
\toprule
$t$ & $B_1$ & $B_2$ & $B_3$ & $B_4$ & $\boldsymbol p^{(t)}$ & Holders \\
\midrule
\endfirsthead
\multicolumn{7}{c}{\tablename~\thetable{} (continued)}\\
\toprule
$t$ & $B_1$ & $B_2$ & $B_3$ & $B_4$ & $\boldsymbol p^{(t)}$ & Holders \\
\midrule
\endhead
\midrule
\multicolumn{7}{r}{\emph{Continued on the next page}}\\
\endfoot
\bottomrule
\endlastfoot
1 & 2 & 2 & 3 & 2 & $(0,1,1,0,0,0)$ & $(\text{--},1,3,\text{--},\text{--},\text{--})$ \\
2 & -- & 2 & -- & 2 & $(0,2,1,0,0,0)$ & $(\text{--},2,3,\text{--},\text{--},\text{--})$ \\
3 & 2 & -- & -- & 2 & $(0,3,1,0,0,0)$ & $(\text{--},4,3,\text{--},\text{--},\text{--})$ \\
4 & 2 & 2 & -- & -- & $(0,4,1,0,0,0)$ & $(\text{--},1,3,\text{--},\text{--},\text{--})$ \\
5 & -- & 2 & -- & 2 & $(0,5,1,0,0,0)$ & $(\text{--},2,3,\text{--},\text{--},\text{--})$ \\
6 & 2 & -- & -- & 2 & $(0,6,1,0,0,0)$ & $(\text{--},4,3,\text{--},\text{--},\text{--})$ \\
7 & 2 & 2 & -- & -- & $(0,7,1,0,0,0)$ & $(\text{--},1,3,\text{--},\text{--},\text{--})$ \\
8 & -- & 2 & -- & 2 & $(0,8,1,0,0,0)$ & $(\text{--},2,3,\text{--},\text{--},\text{--})$ \\
9 & 2 & -- & -- & 2 & $(0,9,1,0,0,0)$ & $(\text{--},4,3,\text{--},\text{--},\text{--})$ \\
10 & 2 & 2 & -- & -- & $(0,10,1,0,0,0)$ & $(\text{--},1,3,\text{--},\text{--},\text{--})$ \\
11 & -- & 2 & -- & 2 & $(0,11,1,0,0,0)$ & $(\text{--},2,3,\text{--},\text{--},\text{--})$ \\
12 & 2 & -- & -- & 2 & $(0,12,1,0,0,0)$ & $(\text{--},4,3,\text{--},\text{--},\text{--})$ \\
13 & 2 & 2 & -- & -- & $(0,13,1,0,0,0)$ & $(\text{--},1,3,\text{--},\text{--},\text{--})$ \\
14 & -- & 2 & -- & 2 & $(0,14,1,0,0,0)$ & $(\text{--},2,3,\text{--},\text{--},\text{--})$ \\
15 & 2 & -- & -- & 2 & $(0,15,1,0,0,0)$ & $(\text{--},4,3,\text{--},\text{--},\text{--})$ \\
16 & 2 & 2 & -- & -- & $(0,16,1,0,0,0)$ & $(\text{--},1,3,\text{--},\text{--},\text{--})$ \\
17 & -- & 2 & -- & 2 & $(0,17,1,0,0,0)$ & $(\text{--},2,3,\text{--},\text{--},\text{--})$ \\
18 & 1 & -- & -- & 2 & $(1,18,1,0,0,0)$ & $(1,4,3,\text{--},\text{--},\text{--})$ \\
19 & -- & 2 & -- & -- & $(1,19,1,0,0,0)$ & $(1,2,3,\text{--},\text{--},\text{--})$ \\
20 & -- & -- & -- & 2 & $(1,20,1,0,0,0)$ & $(1,4,3,\text{--},\text{--},\text{--})$ \\
21 & -- & 2 & -- & -- & $(1,21,1,0,0,0)$ & $(1,2,3,\text{--},\text{--},\text{--})$ \\
22 & -- & -- & -- & 2 & $(1,22,1,0,0,0)$ & $(1,4,3,\text{--},\text{--},\text{--})$ \\
23 & -- & 2 & -- & -- & $(1,23,1,0,0,0)$ & $(1,2,3,\text{--},\text{--},\text{--})$ \\
24 & -- & -- & -- & 2 & $(1,24,1,0,0,0)$ & $(1,4,3,\text{--},\text{--},\text{--})$ \\
25 & -- & 2 & -- & -- & $(1,25,1,0,0,0)$ & $(1,2,3,\text{--},\text{--},\text{--})$ \\
26 & -- & -- & -- & 3 & $(1,25,2,0,0,0)$ & $(1,2,4,\text{--},\text{--},\text{--})$ \\
27 & -- & -- & 3 & -- & $(1,25,3,0,0,0)$ & $(1,2,3,\text{--},\text{--},\text{--})$ \\
28 & -- & -- & -- & 2 & $(1,26,3,0,0,0)$ & $(1,4,3,\text{--},\text{--},\text{--})$ \\
29 & -- & 2 & -- & -- & $(1,27,3,0,0,0)$ & $(1,2,3,\text{--},\text{--},\text{--})$ \\
30 & -- & -- & -- & 3 & $(1,27,4,0,0,0)$ & $(1,2,4,\text{--},\text{--},\text{--})$ \\
31 & -- & -- & 3 & -- & $(1,27,5,0,0,0)$ & $(1,2,3,\text{--},\text{--},\text{--})$ \\
32 & -- & -- & -- & 2 & $(1,28,5,0,0,0)$ & $(1,4,3,\text{--},\text{--},\text{--})$ \\
33 & -- & 2 & -- & -- & $(1,29,5,0,0,0)$ & $(1,2,3,\text{--},\text{--},\text{--})$ \\
34 & -- & -- & -- & 1 & $(2,29,5,0,0,0)$ & $(4,2,3,\text{--},\text{--},\text{--})$ \\
35 & 1 & -- & -- & -- & $(3,29,5,0,0,0)$ & $(1,2,3,\text{--},\text{--},\text{--})$ \\
36 & -- & -- & -- & 3 & $(3,29,6,0,0,0)$ & $(1,2,4,\text{--},\text{--},\text{--})$ \\
37 & -- & -- & 3 & -- & $(3,29,7,0,0,0)$ & $(1,2,3,\text{--},\text{--},\text{--})$ \\
38 & -- & -- & -- & 2 & $(3,30,7,0,0,0)$ & $(1,4,3,\text{--},\text{--},\text{--})$ \\
39 & -- & 2 & -- & -- & $(3,31,7,0,0,0)$ & $(1,2,3,\text{--},\text{--},\text{--})$ \\
40 & -- & -- & -- & 1 & $(4,31,7,0,0,0)$ & $(4,2,3,\text{--},\text{--},\text{--})$ \\
41 & 1 & -- & -- & -- & $(5,31,7,0,0,0)$ & $(1,2,3,\text{--},\text{--},\text{--})$ \\
42 & -- & -- & -- & 3 & $(5,31,8,0,0,0)$ & $(1,2,4,\text{--},\text{--},\text{--})$ \\
43 & -- & -- & 3 & -- & $(5,31,9,0,0,0)$ & $(1,2,3,\text{--},\text{--},\text{--})$ \\
44 & -- & -- & -- & 2 & $(5,32,9,0,0,0)$ & $(1,4,3,\text{--},\text{--},\text{--})$ \\
45 & -- & 2 & -- & -- & $(5,33,9,0,0,0)$ & $(1,2,3,\text{--},\text{--},\text{--})$ \\
46 & -- & -- & -- & 1 & $(6,33,9,0,0,0)$ & $(4,2,3,\text{--},\text{--},\text{--})$ \\
47 & 1 & -- & -- & -- & $(7,33,9,0,0,0)$ & $(1,2,3,\text{--},\text{--},\text{--})$ \\
48 & -- & -- & -- & 3 & $(7,33,10,0,0,0)$ & $(1,2,4,\text{--},\text{--},\text{--})$ \\
49 & -- & -- & 3 & -- & $(7,33,11,0,0,0)$ & $(1,2,3,\text{--},\text{--},\text{--})$ \\
50 & -- & -- & -- & 2 & $(7,34,11,0,0,0)$ & $(1,4,3,\text{--},\text{--},\text{--})$ \\
51 & -- & 2 & -- & -- & $(7,35,11,0,0,0)$ & $(1,2,3,\text{--},\text{--},\text{--})$ \\
52 & -- & -- & -- & 1 & $(8,35,11,0,0,0)$ & $(4,2,3,\text{--},\text{--},\text{--})$ \\
53 & 1 & -- & -- & -- & $(9,35,11,0,0,0)$ & $(1,2,3,\text{--},\text{--},\text{--})$ \\
54 & -- & -- & -- & 3 & $(9,35,12,0,0,0)$ & $(1,2,4,\text{--},\text{--},\text{--})$ \\
55 & -- & -- & 3 & -- & $(9,35,13,0,0,0)$ & $(1,2,3,\text{--},\text{--},\text{--})$ \\
56 & -- & -- & -- & 2 & $(9,36,13,0,0,0)$ & $(1,4,3,\text{--},\text{--},\text{--})$ \\
57 & -- & 2 & -- & -- & $(9,37,13,0,0,0)$ & $(1,2,3,\text{--},\text{--},\text{--})$ \\
58 & -- & -- & -- & 1 & $(10,37,13,0,0,0)$ & $(4,2,3,\text{--},\text{--},\text{--})$ \\
59 & 4 & -- & -- & -- & $(10,37,13,1,0,0)$ & $(4,2,3,1,\text{--},\text{--})$ \\
60 & -- & -- & -- & -- & $(10,37,13,1,0,0)$ & $(4,2,3,1,\text{--},\text{--})$ \\
\end{longtable}
\endgroup
\begingroup
\fontsize{8.5}{10.2}\selectfont
\setlength{\tabcolsep}{5pt}
\renewcommand{\arraystretch}{1.04}
\setlength{\LTleft}{\fill}
\setlength{\LTright}{\fill}
\setlength{\LTcapwidth}{\textwidth}
\begin{longtable}{@{}rcccccc@{}}
\caption{Complete SAA trajectory: $K_b$-demand valuations.}\label{tab:app-rounds-kdemand}\\
\toprule
$t$ & $B_1$ & $B_2$ & $B_3$ & $B_4$ & $\boldsymbol p^{(t)}$ & Holders \\
\midrule
\endfirsthead
\multicolumn{7}{c}{\tablename~\thetable{} (continued)}\\
\toprule
$t$ & $B_1$ & $B_2$ & $B_3$ & $B_4$ & $\boldsymbol p^{(t)}$ & Holders \\
\midrule
\endhead
\midrule
\multicolumn{7}{r}{\emph{Continued on the next page}}\\
\endfoot
\bottomrule
\endlastfoot
1 & 1,2 & 2,3 & 2,3 & 2,3 & $(1,1,1,0,0,0)$ & $(1,1,2,\text{--},\text{--},\text{--})$ \\
2 & -- & 2 & 2,3 & 2,3 & $(1,2,2,0,0,0)$ & $(1,3,4,\text{--},\text{--},\text{--})$ \\
3 & 2 & 2,3 & 3 & 2 & $(1,3,3,0,0,0)$ & $(1,1,2,\text{--},\text{--},\text{--})$ \\
4 & -- & 2 & 2,3 & 2,3 & $(1,4,4,0,0,0)$ & $(1,3,4,\text{--},\text{--},\text{--})$ \\
5 & 2 & 2,3 & 3 & 2 & $(1,5,5,0,0,0)$ & $(1,1,2,\text{--},\text{--},\text{--})$ \\
6 & -- & 2 & 2,3 & 1,2 & $(2,6,6,0,0,0)$ & $(4,3,3,\text{--},\text{--},\text{--})$ \\
7 & 1,2 & 2,3 & -- & 2 & $(3,7,7,0,0,0)$ & $(1,4,2,\text{--},\text{--},\text{--})$ \\
8 & 2 & 2 & 2,3 & 1 & $(4,8,8,0,0,0)$ & $(4,1,3,\text{--},\text{--},\text{--})$ \\
9 & 1 & 2,3 & 2 & 2 & $(5,9,9,0,0,0)$ & $(1,2,2,\text{--},\text{--},\text{--})$ \\
10 & 2 & -- & 2,3 & 1,2 & $(6,10,10,0,0,0)$ & $(4,3,3,\text{--},\text{--},\text{--})$ \\
11 & 1,2 & 2,3 & -- & 2 & $(7,11,11,0,0,0)$ & $(1,4,2,\text{--},\text{--},\text{--})$ \\
12 & 2 & 2 & 2,3 & 1 & $(8,12,12,0,0,0)$ & $(4,1,3,\text{--},\text{--},\text{--})$ \\
13 & 1 & 2,3 & 2 & 2 & $(9,13,13,0,0,0)$ & $(1,2,2,\text{--},\text{--},\text{--})$ \\
14 & 2 & -- & 2,3 & 1,2 & $(10,14,14,0,0,0)$ & $(4,3,3,\text{--},\text{--},\text{--})$ \\
15 & 2,4 & 2,3 & -- & 2 & $(10,15,15,1,0,0)$ & $(4,4,2,1,\text{--},\text{--})$ \\
16 & 2 & 2 & 2,3 & -- & $(10,16,16,1,0,0)$ & $(4,1,3,1,\text{--},\text{--})$ \\
17 & -- & 2,3 & 4 & 2 & $(10,17,17,2,0,0)$ & $(4,2,2,3,\text{--},\text{--})$ \\
18 & 1,2 & -- & 3 & 2 & $(11,18,18,2,0,0)$ & $(1,4,3,3,\text{--},\text{--})$ \\
19 & 2 & 2,3 & -- & 1 & $(12,19,19,2,0,0)$ & $(4,1,2,3,\text{--},\text{--})$ \\
20 & 4 & 2 & 3 & 2 & $(12,20,20,3,0,0)$ & $(4,2,3,1,\text{--},\text{--})$ \\
21 & 2 & 3 & 6 & 2 & $(12,21,21,3,0,1)$ & $(4,4,2,1,\text{--},3)$ \\
22 & 2 & 2 & 3 & -- & $(12,22,22,3,0,1)$ & $(4,1,3,1,\text{--},3)$ \\
23 & -- & 2,3 & -- & 2 & $(12,23,23,3,0,1)$ & $(4,2,2,1,\text{--},3)$ \\
24 & 2 & -- & 3 & 2 & $(12,24,24,3,0,1)$ & $(4,4,3,1,\text{--},3)$ \\
25 & 2 & 2,3 & -- & -- & $(12,25,25,3,0,1)$ & $(4,1,2,1,\text{--},3)$ \\
26 & -- & 2 & 3 & 2 & $(12,26,26,3,0,1)$ & $(4,2,3,1,\text{--},3)$ \\
27 & 2 & 3 & -- & 2 & $(12,27,27,3,0,1)$ & $(4,4,2,1,\text{--},3)$ \\
28 & 2 & 2 & 3 & -- & $(12,28,28,3,0,1)$ & $(4,1,3,1,\text{--},3)$ \\
29 & -- & 2,3 & -- & 2 & $(12,29,29,3,0,1)$ & $(4,2,2,1,\text{--},3)$ \\
30 & 1 & -- & 3 & 2 & $(13,30,30,3,0,1)$ & $(1,4,3,1,\text{--},3)$ \\
31 & -- & 2,3 & -- & 1 & $(14,31,31,3,0,1)$ & $(4,2,2,1,\text{--},3)$ \\
32 & 1 & -- & 3 & 2 & $(15,32,32,3,0,1)$ & $(1,4,3,1,\text{--},3)$ \\
33 & -- & 2,3 & -- & 4 & $(15,33,33,4,0,1)$ & $(1,2,2,4,\text{--},3)$ \\
34 & 4 & -- & 3 & 2 & $(15,34,34,5,0,1)$ & $(1,4,3,1,\text{--},3)$ \\
35 & -- & 2,3 & -- & 1 & $(16,35,35,5,0,1)$ & $(4,2,2,1,\text{--},3)$ \\
36 & 1 & -- & 3 & 2 & $(17,36,36,5,0,1)$ & $(1,4,3,1,\text{--},3)$ \\
37 & -- & 2,6 & -- & 4 & $(17,37,36,6,0,2)$ & $(1,2,3,4,\text{--},2)$ \\
38 & 4 & -- & 6 & 2 & $(17,38,36,7,0,3)$ & $(1,4,3,1,\text{--},3)$ \\
39 & -- & 2,3 & -- & 1 & $(18,39,37,7,0,3)$ & $(4,2,2,1,\text{--},3)$ \\
40 & 1 & -- & 3 & 2 & $(19,40,38,7,0,3)$ & $(1,4,3,1,\text{--},3)$ \\
41 & -- & 2,6 & -- & 4 & $(19,41,38,8,0,4)$ & $(1,2,3,4,\text{--},2)$ \\
42 & 4 & -- & 6 & 2 & $(19,42,38,9,0,5)$ & $(1,4,3,1,\text{--},3)$ \\
43 & -- & 2,3 & -- & 1 & $(20,43,39,9,0,5)$ & $(4,2,2,1,\text{--},3)$ \\
44 & 1 & -- & 3 & 2 & $(21,44,40,9,0,5)$ & $(1,4,3,1,\text{--},3)$ \\
45 & -- & 2,5 & -- & 4 & $(21,45,40,10,1,5)$ & $(1,2,3,4,2,3)$ \\
46 & 4 & -- & -- & 2 & $(21,46,40,11,1,5)$ & $(1,4,3,1,2,3)$ \\
47 & -- & 2 & -- & 1 & $(22,47,40,11,1,5)$ & $(4,2,3,1,2,3)$ \\
48 & 1 & -- & -- & 2 & $(23,48,40,11,1,5)$ & $(1,4,3,1,2,3)$ \\
49 & -- & 2 & -- & 4 & $(23,49,40,12,1,5)$ & $(1,2,3,4,2,3)$ \\
50 & 4 & -- & -- & 2 & $(23,50,40,13,1,5)$ & $(1,4,3,1,2,3)$ \\
51 & -- & 2 & -- & 1 & $(24,51,40,13,1,5)$ & $(4,2,3,1,2,3)$ \\
52 & 1 & -- & -- & 2 & $(25,52,40,13,1,5)$ & $(1,4,3,1,2,3)$ \\
53 & -- & 2 & -- & 4 & $(25,53,40,14,1,5)$ & $(1,2,3,4,2,3)$ \\
54 & 4 & -- & -- & 1 & $(26,53,40,15,1,5)$ & $(4,2,3,1,2,3)$ \\
55 & 1 & -- & -- & 2 & $(27,54,40,15,1,5)$ & $(1,4,3,1,2,3)$ \\
56 & -- & 2 & -- & 4 & $(27,55,40,16,1,5)$ & $(1,2,3,4,2,3)$ \\
57 & 4 & -- & -- & 1 & $(28,55,40,17,1,5)$ & $(4,2,3,1,2,3)$ \\
58 & 1 & -- & -- & 2 & $(29,56,40,17,1,5)$ & $(1,4,3,1,2,3)$ \\
59 & -- & 2 & -- & 4 & $(29,57,40,18,1,5)$ & $(1,2,3,4,2,3)$ \\
60 & 4 & -- & -- & 1 & $(30,57,40,19,1,5)$ & $(4,2,3,1,2,3)$ \\
61 & 1 & -- & -- & 2 & $(31,58,40,19,1,5)$ & $(1,4,3,1,2,3)$ \\
62 & -- & 2 & -- & 4 & $(31,59,40,20,1,5)$ & $(1,2,3,4,2,3)$ \\
63 & 4 & -- & -- & 1 & $(32,59,40,21,1,5)$ & $(4,2,3,1,2,3)$ \\
64 & 1 & -- & -- & 2 & $(33,60,40,21,1,5)$ & $(1,4,3,1,2,3)$ \\
65 & -- & 2 & -- & 4 & $(33,61,40,22,1,5)$ & $(1,2,3,4,2,3)$ \\
66 & 4 & -- & -- & 1 & $(34,61,40,23,1,5)$ & $(4,2,3,1,2,3)$ \\
67 & 1 & -- & -- & 2 & $(35,62,40,23,1,5)$ & $(1,4,3,1,2,3)$ \\
68 & -- & 2 & -- & 4 & $(35,63,40,24,1,5)$ & $(1,2,3,4,2,3)$ \\
69 & 4 & -- & -- & 1 & $(36,63,40,25,1,5)$ & $(4,2,3,1,2,3)$ \\
70 & 1 & -- & -- & 6 & $(37,63,40,25,1,6)$ & $(1,2,3,1,2,4)$ \\
71 & -- & -- & 6 & 2 & $(37,64,40,25,1,7)$ & $(1,4,3,1,2,3)$ \\
72 & -- & 2 & -- & 4 & $(37,65,40,26,1,7)$ & $(1,2,3,4,2,3)$ \\
73 & 4 & -- & -- & 1 & $(38,65,40,27,1,7)$ & $(4,2,3,1,2,3)$ \\
74 & 1 & -- & -- & 3 & $(39,65,41,27,1,7)$ & $(1,2,4,1,2,3)$ \\
75 & -- & -- & 3 & 6 & $(39,65,42,27,1,8)$ & $(1,2,3,1,2,4)$ \\
76 & -- & -- & 6 & 5 & $(39,65,42,27,2,9)$ & $(1,2,3,1,4,3)$ \\
77 & -- & 5 & -- & 2 & $(39,66,42,27,3,9)$ & $(1,4,3,1,2,3)$ \\
78 & -- & 2 & -- & 4 & $(39,67,42,28,3,9)$ & $(1,2,3,4,2,3)$ \\
79 & 4 & -- & -- & 1 & $(40,67,42,29,3,9)$ & $(4,2,3,1,2,3)$ \\
80 & 1 & -- & -- & 3 & $(41,67,43,29,3,9)$ & $(1,2,4,1,2,3)$ \\
81 & -- & -- & 3 & 6 & $(41,67,44,29,3,10)$ & $(1,2,3,1,2,4)$ \\
82 & -- & -- & 6 & 5 & $(41,67,44,29,4,11)$ & $(1,2,3,1,4,3)$ \\
83 & -- & 5 & -- & 2 & $(41,68,44,29,5,11)$ & $(1,4,3,1,2,3)$ \\
84 & -- & 2 & -- & 4 & $(41,69,44,30,5,11)$ & $(1,2,3,4,2,3)$ \\
85 & 4 & -- & -- & 1 & $(42,69,44,31,5,11)$ & $(4,2,3,1,2,3)$ \\
86 & 1 & -- & -- & 3 & $(43,69,45,31,5,11)$ & $(1,2,4,1,2,3)$ \\
87 & -- & -- & 3 & 6 & $(43,69,46,31,5,12)$ & $(1,2,3,1,2,4)$ \\
88 & -- & -- & 6 & 5 & $(43,69,46,31,6,13)$ & $(1,2,3,1,4,3)$ \\
89 & -- & 5 & -- & 2 & $(43,70,46,31,7,13)$ & $(1,4,3,1,2,3)$ \\
90 & -- & 2 & -- & 4 & $(43,71,46,32,7,13)$ & $(1,2,3,4,2,3)$ \\
91 & 4 & -- & -- & 1 & $(44,71,46,33,7,13)$ & $(4,2,3,1,2,3)$ \\
92 & 1 & -- & -- & 3 & $(45,71,47,33,7,13)$ & $(1,2,4,1,2,3)$ \\
93 & -- & -- & 3 & 6 & $(45,71,48,33,7,14)$ & $(1,2,3,1,2,4)$ \\
94 & -- & -- & 6 & 5 & $(45,71,48,33,8,15)$ & $(1,2,3,1,4,3)$ \\
95 & -- & 5 & -- & 2 & $(45,72,48,33,9,15)$ & $(1,4,3,1,2,3)$ \\
96 & -- & 2 & -- & 4 & $(45,73,48,34,9,15)$ & $(1,2,3,4,2,3)$ \\
97 & 4 & -- & -- & 1 & $(46,73,48,35,9,15)$ & $(4,2,3,1,2,3)$ \\
98 & 1 & -- & -- & 3 & $(47,73,49,35,9,15)$ & $(1,2,4,1,2,3)$ \\
99 & -- & -- & 3 & 6 & $(47,73,50,35,9,16)$ & $(1,2,3,1,2,4)$ \\
100 & -- & -- & 6 & 5 & $(47,73,50,35,10,17)$ & $(1,2,3,1,4,3)$ \\
101 & -- & 5 & -- & 2 & $(47,74,50,35,11,17)$ & $(1,4,3,1,2,3)$ \\
102 & -- & 2 & -- & 4 & $(47,75,50,36,11,17)$ & $(1,2,3,4,2,3)$ \\
103 & 4 & -- & -- & 1 & $(48,75,50,37,11,17)$ & $(4,2,3,1,2,3)$ \\
104 & 1 & -- & -- & 3 & $(49,75,51,37,11,17)$ & $(1,2,4,1,2,3)$ \\
105 & -- & -- & 3 & 6 & $(49,75,52,37,11,18)$ & $(1,2,3,1,2,4)$ \\
106 & -- & -- & 6 & 5 & $(49,75,52,37,12,19)$ & $(1,2,3,1,4,3)$ \\
107 & -- & 5 & -- & 2 & $(49,76,52,37,13,19)$ & $(1,4,3,1,2,3)$ \\
108 & -- & 2 & -- & 4 & $(49,77,52,38,13,19)$ & $(1,2,3,4,2,3)$ \\
109 & 4 & -- & -- & 1 & $(50,77,52,39,13,19)$ & $(4,2,3,1,2,3)$ \\
110 & 1 & -- & -- & 3 & $(51,77,53,39,13,19)$ & $(1,2,4,1,2,3)$ \\
111 & -- & -- & 3 & 6 & $(51,77,54,39,13,20)$ & $(1,2,3,1,2,4)$ \\
112 & -- & -- & 6 & 5 & $(51,77,54,39,14,21)$ & $(1,2,3,1,4,3)$ \\
113 & -- & 5 & -- & 2 & $(51,78,54,39,15,21)$ & $(1,4,3,1,2,3)$ \\
114 & -- & 2 & -- & 4 & $(51,79,54,40,15,21)$ & $(1,2,3,4,2,3)$ \\
115 & 4 & -- & -- & 1 & $(52,79,54,41,15,21)$ & $(4,2,3,1,2,3)$ \\
116 & 1 & -- & -- & 3 & $(53,79,55,41,15,21)$ & $(1,2,4,1,2,3)$ \\
117 & -- & -- & 3 & 6 & $(53,79,56,41,15,22)$ & $(1,2,3,1,2,4)$ \\
118 & -- & -- & 6 & 5 & $(53,79,56,41,16,23)$ & $(1,2,3,1,4,3)$ \\
119 & -- & 5 & -- & -- & $(53,79,56,41,17,23)$ & $(1,2,3,1,2,3)$ \\
120 & -- & -- & -- & -- & $(53,79,56,41,17,23)$ & $(1,2,3,1,2,3)$ \\
\end{longtable}
\endgroup

\newpage
\emph{Competitive equilibrium.} Table~\ref{tab:app-ce} verifies
Definition~\ref{defCE} under the modified valuations of
Proposition~\ref{epsilonoptimal}. For each bidder, it reports the modified
surplus of the final allocated bundle and the maximum modified surplus over all
$2^{|K|}$ bundles; the two coincide in every case, so~\eqref{ce1} holds. Any
product left unallocated carries a zero standing price, so~\eqref{ce2} holds as
well.

\begin{table}[H]
\centering\small
\caption{Competitive-equilibrium verification under the modified valuations.}
\label{tab:app-ce}
\setlength{\tabcolsep}{8pt}
\renewcommand{\arraystretch}{1.12}
\begin{tabular}{@{}lccrrc@{}}
\toprule
Valuation & $b$ & $U_b^*$ & \shortstack{Allocated\\modified surplus} & \shortstack{Maximum\\modified surplus} & Maximizer \\
\midrule
Additive & 1 & $\{1,4\}$ & 15.68 & 15.68 & Yes \\
 & 2 & $\{2,5\}$ & 24.93 & 24.93 & Yes \\
 & 3 & $\{3,6\}$ & 27.52 & 27.52 & Yes \\
 & 4 & $\emptyset$ & 0.00 & 0.00 & Yes \\
\addlinespace[4pt]
Symmetric concave & 1 & $\{5,6\}$ & 22.52 & 22.52 & Yes \\
 & 2 & $\{3,4\}$ & 48.45 & 48.45 & Yes \\
 & 3 & $\{1\}$ & 24.60 & 24.60 & Yes \\
 & 4 & $\{2\}$ & 24.00 & 24.00 & Yes \\
\addlinespace[4pt]
Unit-demand & 1 & $\{4\}$ & 49.16 & 49.16 & Yes \\
 & 2 & $\{2\}$ & 63.00 & 63.00 & Yes \\
 & 3 & $\{3\}$ & 67.60 & 67.60 & Yes \\
 & 4 & $\{1\}$ & 43.57 & 43.57 & Yes \\
\addlinespace[4pt]
$K_b$-demand & 1 & $\{1,4\}$ & 15.68 & 15.68 & Yes \\
 & 2 & $\{2,5\}$ & 25.93 & 25.93 & Yes \\
 & 3 & $\{3,6\}$ & 35.52 & 35.52 & Yes \\
 & 4 & $\emptyset$ & 0.00 & 0.00 & Yes \\
\bottomrule
\end{tabular}
\end{table}

\end{document}